\theoremstyle{thmstyleone}%
\newtheorem{theorem}{Theorem}
\newtheorem{lemma}{Lemma}
\theoremstyle{thmstyletwo}%
\newtheorem{remark}{Remark}%
\theoremstyle{thmstylethree}%
\newtheorem{definition}{Definition}%
\begin{document}

\title[Withdrawal Success Estimation]{Withdrawal Success Estimation}


\author[1]{\fnm{Hayden} \sur{Brown} ORCID: 0000-0002-2975-2711}\email{haydenb@nevada.unr.edu}

\affil[1]{\orgdiv{Department of Mathematics and Statistics}, \orgname{University of Nevada, Reno}, \orgaddress{1664 \street{N. Virginia Street}, \city{Reno}, \postcode{89557}, \state{Nevada}, \country{USA}}}


\abstract{Given a geometric L\'evy alpha-stable wealth process, a log-L\'evy alpha-stable lower bound is constructed for the terminal wealth of a regular investing schedule. Using a transformation, the lower bound is applied to a schedule of withdrawals occurring after an initial investment. As a result, an upper bound is described on the probability to complete a given schedule of withdrawals. For withdrawals of a constant amount at equidistant times, necessary conditions are given on the initial investment and parameters of the wealth process such that $k$ withdrawals can be made with 95\% confidence. When the initial investment is in the S\&P Composite Index and $2\leq k\leq 16$, then the initial investment must be at least $k$ times the amount of each withdrawal.}

\keywords{Dollar cost averaging, Terminal wealth, Standard and Poor, L\'evy alpha-stable, Withdrawals}


\pacs[JEL Classification]{C22, E27, G11}

\pacs[MSC Classification]{60E15, 60J70, 91B70}


\maketitle

\section{Introduction}\label{sec1}
A regular investing schedule involves predetermined investment amounts and times. In \cite{oldDCA}, a log-Normal lower bound is given for the returns of a regular investing schedule, provided the asset wealth process is a geometric Brownian motion. Here, the lower bound is generalized to geometric L\'evy alpha-stable asset wealth processes. The main advantage of this generalization is that now, a lower bound can be produced when the asset wealth process has heavy tails and skewness. The generalized lower bound presented here has a log-L\'evy alpha-stable distribution, and it is given as a lower bound for terminal wealth, instead of returns. Note that return indicates terminal wealth divided by the total invested.

An interesting application of the lower bound on terminal wealth addresses the probability to make a sequence of withdrawals. A transformation of the recursion indicating how much money is left after each withdrawal allows the withdrawals process to be considered in the framework of a regular investing schedule. Section \ref{notation} describes the transformation in detail. Since the transformed withdrawals process is a regular investing schedule, the lower bound on terminal wealth can be applied. As a result, it is possible to establish a relationship between given parameters and the probability to make a sequence of withdrawals.

The generalized lower bound is applied to dollar cost averaging (DCA). DCA is a particular regular investing schedule, where a constant amount is invested at equidistant time steps. When a constant amount is withdrawn at equidistant time steps, the transformed withdrawals process is DCA. Thus application of the generalized lower bound to DCA addresses withdrawals as well. Some applications use historic price data from the S\&P Composite Index. Figure \ref{fig:genpic} illustrates a regular investing schedule vs a scedule of withdrawals. 

\begin{figure}[h] 
\setlength{\unitlength}{0.4cm}
\begin{picture}(20,18)
\thicklines
\put(4,17){Regular Investment}
\put(1,7){\begin{turn}{90}\text{time}\end{turn}}
\put(4,1){\vector(0,1){15}}
\put(4,1){\line(-1,0){.4}}
\put(2.8,.7){0}
\put(5,1){\vector(-1,0){1}}
\put(5,.8){{\footnotesize buy $\frac{c_0}{X(0)}$ shares}}
\put(4,3){\line(-1,0){.4}}
\put(2.8,2.7){$t_1$}
\put(5,3){\vector(-1,0){1}}
\put(5,2.8){{\footnotesize buy $\frac{c_1}{X(t_1)}$ shares}}
\put(4,8){\line(-1,0){.4}}
\put(2.8,7.7){$t_2$}
\put(5,8){\vector(-1,0){1}}
\put(5,7.8){{\footnotesize buy $\frac{c_2}{X(t_2)}$ shares}}
\put(4,11){\line(-1,0){.4}}
\put(2.8,10.7){$t_3$}
\put(5,11){\vector(-1,0){1}}
\put(5,10.8){{\footnotesize buy $\frac{c_3}{X(t_3)}$ shares}}
\put(4,13){\line(-1,0){.4}}
\put(2.8,12.7){$t_4$}
\put(5,13){\vector(-1,0){1}}
\put(5,12.8){{\footnotesize buy $\frac{c_4}{X(t_4)}$ shares}}

\put(18,17){Withdrawls}
\put(18,1){\vector(0,1){15}}
\put(18,1){\line(-1,0){.4}}
\put(16.8,.7){0}
\put(19,1){\vector(-1,0){1}}
\put(19,.8){{\footnotesize buy $\frac{P}{X(0)}$ shares}}
\put(18,3){\line(-1,0){.4}}
\put(16.8,2.7){$t_1$}
\put(19,3){\vector(-1,0){1}}
\put(19,2.8){{\footnotesize sell $\frac{w_1}{X(t_1)}$ shares}}
\put(18,8){\line(-1,0){.4}}
\put(16.8,7.7){$t_2$}
\put(19,8){\vector(-1,0){1}}
\put(19,7.8){{\footnotesize sell $\frac{w_2}{X(t_2)}$ shares}}
\put(18,11){\line(-1,0){.4}}
\put(16.8,10.7){$t_3$}
\put(19,11){\vector(-1,0){1}}
\put(19,10.8){{\footnotesize sell $\frac{w_3}{X(t_3)}$ shares}}
\put(18,13){\line(-1,0){.4}}
\put(16.8,12.7){$t_4$}
\put(19,13){\vector(-1,0){1}}
\put(19,12.8){{\footnotesize sell $\frac{w_4}{X(t_4)}$ shares}}
\end{picture}
\caption{Illustrates regular investment vs withdrawals for a particular stock using predetermined amounts and times. $X:[0,\infty)\to(0,\infty)$ is the share price, and the $c_k,\ w_k$ and $P$ are positive constants. The ability to buy fractional shares is assumed. Dollar cost averaging occurs when $t_k=k$ and $c_k=c_0$ for $k=0,1,...$.}
\label{fig:genpic}
\end{figure}
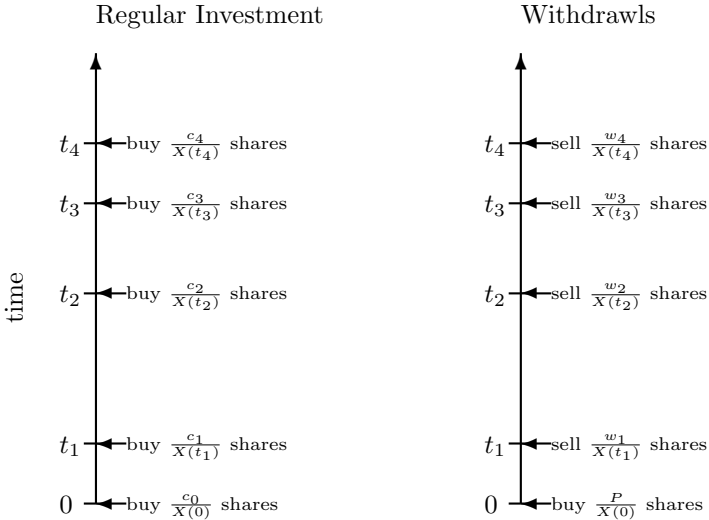

\subsection{Literature Review}
Practical justification for using a geometric L\'evy alpha-stable asset wealth process comes in several stages. First, values of the asset wealth process are collected at equidistant times. Then log-returns between successive values are computed. Last, log-returns are fit to a L\'evy alpha-stable distribution and checked for independence. Note that independence of log-returns follows from the random walk hypothesis. For this reason, many of the works mentioned in the next paragraph do not address independence explicitly.

L\'evy alpha-stable log-returns have been justified in several cases. To give an idea of what ranges of the shape parameter $\alpha$ are practical, values of $\alpha$ are also recorded. In \cite{nolan2003modeling}, the log difference of successive exchange rates is shown to fit a L\'evy alpha-stable random variable. In particular, daily British Pound vs. German Mark exchange rates are fitted over 1980 to 1996, and monthly US Dollar vs Tanzanian Shilling exchange rates are fitted over 1975 to 1997. The former estimates $\alpha=1.495$, and the latter estimates $\alpha=1.088$. US stock log-returns are shown to follow a L\'evy alpha-stable distribution in \cite{fama1965behavior,leitch1975estimation}. Chinese stock daily log-returns are shown to follow a L\'evy alpha-stable distribution in \cite{xu2011modeling}. For US stocks, $\alpha$ is generally estimated in the range 1.6 to 2. For Chinese stocks, $\alpha$ is estimated around 1.4. In \cite{cornew1984stable}, daily log-returns from commodity futures are well-fit to a L\'evy alpha-stable distribution, and $\alpha$ is typically estimated to be in the range 1.5 to 2. Overall, $\alpha>1$ is practical.

Given a regular investing schedule and a random asset wealth process, terminal wealth is a sum of products of random variables. Since elements of the sum are dependent, the distribution of terminal wealth is complicated. As a result, it is desirable to find a bound on terminal wealth that has a simple distribution. Then it is possible to make interesting statements about terminal wealth using the simple distribution. 

When the asset wealth process is a geometric Brownian motion, terminal wealth of a regular investing schedule is a sum of dependent log-Normal random variables. The sum can be estimated as a log-Normal random variable, as in \cite{mehta2007approximating,schwartz1982distribution}, but these estimates are not necessarily lower bounds. A lower bound for the sum is given in \cite{dhaene2002concept,dhaene2002conceptT}, but it is a lower bound in the convex order sense. A lower bound in probability is given in \cite{oldDCA}, and that lower bound is generalized here to geometric L\'evy alpha-stable wealth processes.

Terminal wealth of a continuous investment schedule has been considered \cite{milevsky2003continuous}. In particular, the investment schedule is the continuous version of DCA, and the asset wealth process is assumed to be a geometric Brownian motion w.r.t. time. As a result, the terminal wealth is integrated geometric Brownian motion.

Much of the recent research considering a schedule of withdrawals focuses on variable annuities. In a simple example of a variable annuity, an individual first pays a premium to an insurance company, and the insurance company invests the premium into a mutual fund. Then, the insurance company makes payouts to the individual at standard intervals (e.g. monthly, quarterly, semiannually or annually). The size of each payout depends on the options and benefits specific to the variable annuity contract. Furthermore, each set of options and benefits has associated fees and penalties written into the contract. One benefit that has recieved substantial attention is the guaranteed minimum withdrawal benefit, where an individual is guaranteed to at least withdraw the fee- and penatly-adjusted premium. Pricing of variable annuities with a guaranteed minimum withdrawal benefit has been addressed in several cases (e.g. \cite{milevsky2006financial,dai2008guaranteed}). Here, results considering withdrawls use a single initial investment, or premium, and then a sequence of withdrawls. 

Empirical research addresses the sustainability of a constant withdrawal amount at equidistant times, given the initial investment. In \cite{cooley1998retirement}, annual withdrawals are made from portfolios consisting of US stocks and bonds, using data from 1926 to 1995. The withdrawals are made over the periods 15, 20, 25 and 30 years. Portfolios having only stocks are better able to sustain higher withdrawal amounts, like 12\% of the initial investment. Portfolios having 50\% stocks and 50\% bonds are better able to withstain withdrawal amounts that are 7\% of the initial investment. In \cite{cooley2003does}, the effect of international diversification on sustainability of withrawals from a stock and bond portfolio is shown to be minor when the portfolio has at least 50\% stocks; the time period considered is 1970 to 2001. Here, applications consider the probability to make a sequence of withdrawals, given an initial investment in just US stocks.


\subsection{Main Results}
Investment in exactly one asset is considered. All results require the assumption that the logarithm of the asset wealth process is a L\'evy alpha-stable process. 

For a regular investment schedule, a lower bound is given on the terminal wealth such that the logarithm of the lower bound has a L\'evy alpha-stable distribution (see Theorems \ref{tG}, \ref{t1} and \ref{tcont}). Theorem \ref{tG} provides recursive expression of the lower bound for the general investing schedule. Theorems \ref{t1} and \ref{tcont} provide closed form expression of the lower bound for DCA and continuous DCA, respectively. 

For a regular withdrawal schedule, an upper bound is given on the probability of being able to successfully execute all planned withdrawals. For the general case, where withdrawal amounts and times are unconstrained, the upper bound is a result of Theorems \ref{tG*} and \ref{tW}. For the case where withdrawal amounts and time intervals are equal, the upper bound is a result of Theorems \ref{t1*} and \ref{tW}. Remark \ref{rcont} describes the upper bound for the case when withdrawals are continuous.

\subsection{Applications}
Applications are split into general applications and applications with data. General applications consider a range of parameters for the L\'evy alpha-stable asset wealth process. Applications with data focus only on parameters that provide a good fit to the data. Data is from the S\&P Composite Index, specifically the annual data from 1871 to 2020, including reinvested dividends and adjustment for inflation. This S\&P index tracks the weighted average of stock prices for the largest US companies, where each company's weight is equal to its market capitalization. Funds tracking this index, like the S\&P 500, are very popular among investors. Data is taken from Robert Shiller's online data library, and further details on this data can be found in Section \ref{sec51}. 

General applications address the lump sum discount, discrete withdrawals and continuous withdrawals. The lump sum discount was introduced in \cite{oldDCA}. It indicates a lump sum investment that has a terminal wealth distribution that is no better than a given regular investing schedule. Note that a lump sum investment is a single investment at time 0. Here, the lump sum discount is described for continuous DCA using Theorem \ref{tcont}. The lump sum discount is invariant to the skewness and scale parameters of the asset wealth process, so it is illustrated for various location and shape parameters. In general, investing a similar amount in lump sum for less than half the time is no better than using continuous DCA.

The application addressing discrete withdrawals considers equal withdrawals at equidistant times. Suppose an investor wishes to make $k$ withdrawals with a given level of confidence, shape parameter and skewness parameter. Using Theorems \ref{t1*} and \ref{tW}, a procedure is outlined to give conditions on the location and scale parameters that force the initial investment to be at least the total intended to be withdrawn. The procedure is demonstrated for 95\% confidence and particular shape and skewness parameters. 

The application addressing continuous withdrawals suppose an investor wishes to continuously withdraw a specified amount, at a constant rate, with a given level of confidence. For 95\% confidence with various shape, skewness and location parameters, conditions are given on the scale parameter that force the initial investment to be at least the total intended to be withdrawn.  

Applications with S\&P data first address the fit of log-returns to a L\'evy alpha-stable process. In \cite{oldDCA}, log-returns are fit to a Brownian motion, which is a particular L\'evy alpha-stable process. Main results allow log-returns to be fit to L\'evy alpha-stable processes that are not Brownian motions. Consequently, a L\'evy alpha-stable process is found that fits log-returns better than the Brownian motion of \cite{oldDCA}. The lower bound on returns from Theorem \ref{t1} is compared using the new L\'evy alpha-stable process fit vs the Brownian motion fit. Quantiles are practically identical, except the quantiles less than .05. The quantiles under .05 are less with the L\'evy alpha-stable process, implying there is additional downside risk that the Brownian motion fit misses. However, if a decision is made based on quantiles no less than .05, then the decision will likely be invariant to the L\'evy alpha-stable process fit vs the Brownian motion fit.

Next, applications with S\&P data consider discrete withdrawals. In order to make $k\in\{2,3,...,16\}$ equal, annual withdrawals from the S\&P Composite Index, with 95\% confidence, it is necessary for the initial investment to be at least $k$ times the amount of each withdrawal. 

Last, continuous withdrawals are considered. Given various levels of confidence, necessary initial investments are given to continuously withdraw a particular amount from the S\&P Composite Index over $n\in\{2,6,12,20,30,42\}$ years. To continuously withdraw $d\$$ with a high level of confidence, like 99\%, the initial investment needs to be at least $d\$$, when continuous withdrawals are made over $n\leq 42$ years. For lower confidence, like 60\%, the initial investment needs to be at least $d/3\$$.

\subsection{Organization}
Section \ref{sec2} provides the problem setup and main results. Main results are split into section \ref{mainPI} for regular investment and section \ref{mainW} for withdrawals. Sections \ref{genApp} and \ref{dataApp} provide application of the main results given in section \ref{sec2}. Applications are split into section \ref{genApp} for general applications and section \ref{dataApp} for applications using data. Section \ref{sec4} provides closing remarks, including a discussion of related future research ideas. Appendix \ref{secA1} provides proofs of the theorems stated in Section \ref{sec2}.

\section{Definitions \& Main Results}\label{sec2}
\subsection{Notation}\label{notation}
Let $\log X:[0,\infty)\times\Omega\to(0,\infty)$ be a L\'evy alpha-stable process defined on the probability space $(\Omega,\mathcal{F},\mathbb{P})$. When using $X(t,\omega)$ as a function of $\omega$ only, write $X(t)$ in place of $X(t,\omega)$. For each $t\in[0,\infty)$, the characteristic function of $\log\frac{X(t)}{X(s)}$, $t>s$, is given by
\begin{equation*}
\Phi(\theta)=\begin{cases}
\exp\{i\theta (t-s)\mu-(t-s)\lvert \sigma\theta\rvert^{\alpha}(1-i\beta\frac{\theta}{\lvert\theta\rvert}\tan\frac{\pi\alpha}{2})\}&\alpha\neq1\\
\exp\{i\theta (t-s)\mu-(t-s)\lvert \sigma\theta\rvert(1+i\beta\frac{2\theta}{\pi\lvert\theta\rvert}\log\lvert\theta\rvert\}&\alpha=1
\end{cases},
\end{equation*}
where $\alpha\in(0,2]$, $\beta\in[-1,1]$, $\sigma\in(0,\infty)$ and $\mu\in\mathbb{R}$. 

For arbitrary $Z$, use $Z\sim\mathcal{S}(\alpha^*,\beta^*,\sigma^*,\mu^*)$ to indicate that $Z$ is a L\'evy alpha-stable random variable with shape $\alpha^*$, skewness $\beta^*$, scale $\sigma^*$ and location $\mu^*$. Moreover, $Z$ has characteristic function
\begin{equation*}
\Phi(\theta)=\begin{cases}
\exp\{i\theta\mu^*-\lvert \sigma^*\theta\rvert^{\alpha^*}(1-i\beta^*\frac{\theta}{\lvert\theta\rvert}\tan\frac{\pi\alpha^*}{2})\}&\alpha^*\neq1\\
\exp\{i\theta\mu^*-\lvert \sigma^*\theta\rvert(1+i\beta^*\frac{2\theta}{\pi\lvert\theta\rvert}\log\lvert\theta\rvert\}&\alpha^*=1
\end{cases}.
\end{equation*}
Say that $(\sigma^*)^{-1}\mu^*$ is the Sharpe ratio of $Z$.

Let $\{t_k\}_{k=0}$ be a sequence in $[0,\infty)$ with $t_0=0$. Suppose that $c_k$ is invested at each time step $t_k$. Let $X_k=\frac{X(t_k)}{X(t_{k-1})}$ for $k=1,2,...$. The returns at time step $t_k$ are given by $R_k=(\sum_{j=0}^{k-1}c_j)^{-1}Y_k$ for $k=1,2,...$, where the $Y_k$ are computed recursively via
\begin{equation}
\begin{split}
Y_1&=c_0X_1,\\
Y_k&=X_k\cdot(Y_{k-1}+c_{k-1}),\quad k=2,3,...
\end{split}
\label{recursion}
\end{equation}
Note that $Y_k$ is the terminal wealth at time $t_k$. 

Next consider the analogue to \eqref{recursion}, but for withdrawals. In particular, suppose $P$ is invested at time $t_0$, and $w_k$ is withdrawn at each time step $t_k$, where $k=1,2,...$. Then the amount remaining at time $t_k$ after withdrawal is computed recursively via
\begin{equation}
\begin{split}
W_1&=PX_1-w_1,\\
W_k&=W_{k-1}X_k-w_k,\quad k=2,3,...
\end{split}
\label{recursionW0}
\end{equation}
Note that $w_1,...,w_k$ can be withdrawn at times $t_1,...,t_k$ if and only if $W_k\geq0$. Observe that
\begin{equation}
\begin{split}
P&=X_1^{-1}\cdot(W_1+w_1),\\
W_{k-1}&=X_k^{-1}\cdot(W_k+w_k),\quad k=2,3,...
\end{split}
\label{recursionW}
\end{equation}

The following abbreviations will be used to shorten descriptions: dollar cost averaging (DCA), lump sum (LS), upper bound (UB) and lower bound (LB). In all expressions, $\log$ indicates the natural logarithm. When convenient, expressions like $X=Y$ w.p.1 will be abbreviated with $X=Y$. Equality in distribution will be denoted with $X\stackrel{d}{=}Y$.

\subsection{Main Results for Regular Investment}\label{mainPI}
Definition \ref{defz} recursively constructs the lower bound of terminal wealth. Theorem \ref{tG} describes the lower bound of terminal wealth recursively. Under the conditions $c_k=1$ and $t_k=k$ for $k=0,1,...$, Theorem \ref{t1} describes the lower bound of DCA terminal wealth in closed form. Theorem \ref{tcont} provides an upper bound on the cumulative distribution function of terminal wealth when $c_k=\frac{1}{n}$ and $t_k=\frac{k}{n}$ for $k=0,1,...,n-1$, and the limit is taken as $n\to\infty$. This is the continuous version of DCA.

Note that a lower bound given for terminal wealth is easily transformed into a lower bound for returns. In particular, if $Z_k\leq Y_k$ w.p.1 and $\log Z_k\sim\mathcal{S}(\alpha,\beta,\sigma_k,\mu_k)$, then $(\sum_{j=0}^{k-1}c_j)^{-1}Z_k\leq R_k$ w.p.1 and \begin{equation*}
\log \Big((\sum_{j=0}^{k-1}c_j)^{-1}Z_k\Big)\sim\mathcal{S}(\alpha,\beta,\sigma_k,\mu_k-\log\sum_{j=0}^{k-1}c_j).
\end{equation*}

\begin{definition}
Set $Z_1=Y_1$ and 
\begin{equation*}
Z_k=X_k(a_{k-1}+c_{k-1})\Bigg(\frac{Z_{k-1}}{a_{k-1}}\Bigg)^{b_{k-1}},\quad k=2,3,...,
\end{equation*}
where $a_{k-1}>0$ and $b_{k-1}=\frac{a_{k-1}}{a_{k-1}+c_{k-1}}$.
\label{defz}
\end{definition}

\begin{remark}
$R_k$ and $(\sum_{j=0}^{k-1}c_j)^{-1}Z_k$ are invariant to a rescaling of the investment amounts. In particular, if the substitution $c_j\leftarrow \lambda c_j$ is made for each $j$, then $R_k$ and $(\sum_{j=0}^{k-1}c_j)^{-1}Z_k$ do not change.
\end{remark}

\begin{theorem}
For every $k=1,2,...$, $Z_k\leq Y_k$ w.p.1 and $\log Z_k\sim\mathcal{S}(\alpha,\beta,\sigma_k,\mu_k)$, where parameters are given recursively via 
\begin{equation*}
\begin{split}
\sigma_1^\alpha&=(c_0\sigma)^{\alpha}t_1,\quad\mu_1=\begin{cases}
c_0\mu t_1&\alpha\neq1\\
c_0\mu t_1-\frac{2}{\pi}\beta c_0\sigma\sqrt[\alpha]{t_1}\log c_0&\alpha= 1
\end{cases},\\
m_k&=b_{k-1}(\mu_{k-1}-\log a_{k-1})+\log(a_{k-1}+c_{k-1})+\mu(t_k-t_{k-1}),\\
\mu_k&=\begin{cases}
m_k&\alpha\neq1\\
m_k-\frac{2}{\pi}\beta b_{k-1}\sigma_{k-1}\log b_{k-1}&\alpha=1
\end{cases},\\
\sigma_{k}^{\alpha}&=(b_{k-1}\sigma_{k-1})^{\alpha}+\sigma^{\alpha}(t_k-t_{k-1}).
\end{split}
\end{equation*}
\label{tG}
\end{theorem}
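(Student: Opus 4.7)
The plan is to prove both claims simultaneously by induction on $k$. The base case $k=1$ identifies $Z_1 = Y_1$ from the definition and reads off the stable parameters of $\log Z_1$ from the characteristic function of $\log X_1$ given in Section \ref{notation}, together with the scaling induced by the factor $c_0$. Two structural ingredients drive the inductive step: a deterministic weighted AM--GM inequality to obtain the pointwise bound $Z_k \leq Y_k$, and the scaling/convolution rules for $\mathcal{S}(\alpha,\beta,\cdot,\cdot)$ combined with the independence of increments for the L\'evy process $\log X$.

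For the pointwise bound, I apply weighted AM--GM to the positive pair $(Z_{k-1}/a_{k-1},\,1)$ with weights $(b_{k-1},\,1-b_{k-1})$:
\[
(a_{k-1}+c_{k-1})\Bigl(\tfrac{Z_{k-1}}{a_{k-1}}\Bigr)^{b_{k-1}} \;\leq\; (a_{k-1}+c_{k-1})\Bigl(b_{k-1}\tfrac{Z_{k-1}}{a_{k-1}} + (1-b_{k-1})\Bigr) \;=\; Z_{k-1}+c_{k-1}.
\]
Multiplying by $X_k > 0$ and invoking the inductive hypothesis $Z_{k-1} \leq Y_{k-1}$ then yields $Z_k \leq X_k(Y_{k-1}+c_{k-1}) = Y_k$.

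For the distributional claim I take logarithms in Definition \ref{defz} to write
\[
\log Z_k \;=\; \log X_k \;+\; b_{k-1}\log Z_{k-1} \;+\; \bigl[\log(a_{k-1}+c_{k-1}) - b_{k-1}\log a_{k-1}\bigr].
\]
Here $\log X_k$ is independent of $\log Z_{k-1}$ because $\log X$ is a L\'evy process and $Z_{k-1}$ is measurable with respect to $\sigma(X_1,\ldots,X_{k-1})$. Applying the positive-scalar scaling rule to $b_{k-1}\log Z_{k-1}$ and then the convolution rule for independent stables with common $(\alpha,\beta)$, one reads off the scale recursion $\sigma_k^\alpha = (b_{k-1}\sigma_{k-1})^\alpha + \sigma^\alpha(t_k-t_{k-1})$ and the location recursion for $m_k$; skewness is preserved because positive scaling fixes $\beta$ and the convolution formula $\beta^* = (\beta_1\sigma_1^\alpha+\beta_2\sigma_2^\alpha)/(\sigma_1^\alpha+\sigma_2^\alpha)$ collapses when both summands already have skewness $\beta$.

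The main obstacle is the $\alpha=1$ case. Scaling a $\mathcal{S}(1,\beta,\sigma_{k-1},\mu_{k-1})$ variable by $b_{k-1}>0$ produces the extra additive term $-\frac{2}{\pi}\beta\,b_{k-1}\sigma_{k-1}\log b_{k-1}$ in the location parameter (the standard anomaly of the $\alpha=1$ stable family under rescaling), and this correction must be propagated carefully through every inductive step; it is precisely what gives rise to the second branch of the piecewise formula for $\mu_k$, as well as to the analogous correction in the base-case formula for $\mu_1$.
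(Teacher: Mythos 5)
Your proposal is correct and follows essentially the same route as the paper: induction on $k$, with the pointwise bound coming from the inequality $(a+c)(x/a)^{a/(a+c)}\leq x+c$ (your weighted AM--GM step is precisely the paper's Lemma \ref{llb}, there cited from \cite{oldDCA}) and the distributional claim coming from independence of the L\'evy increments together with the scaling and convolution rules for stable laws, including the $\alpha=1$ location anomaly (the paper's Lemma \ref{stableT}). The only cosmetic difference is that you apply the inequality to $Z_{k-1}$ and then use monotonicity of $x\mapsto X_k(x+c_{k-1})$, whereas the paper applies it to $Y_{k-1}$ and uses monotonicity of $x\mapsto x^{b_{k-1}}$; both orderings are valid.
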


\begin{remark}
In general $a_{k-1}$ can be any positive real number. A good choice for $a_{k-1}$ that leads to nice simplification is $a_{k-1}=\exp\mu_{k-1}$. Then the term $b_{k-1}(\mu_{k-1}-\log a_{k-1})$ in Theorem \ref{tG} disappears. Moreover, $a_{k-1}=\exp\mu_{k-1}$ maximizes $m_k$ over $a_{k-1}$.
\end{remark}

\begin{theorem}
Let $\alpha\neq1$ and $\mu\neq0$. Suppose $a_{k+1}=\exp\mu_{k+1}$, $c_k=1$ and $t_k=k$ for $k=0,1,...$. Then for every $k=1,2,...$, $Z_k\leq Y_k$ w.p.1 and $\log Z_k\sim\mathcal{S}(\alpha,\beta,\sigma_k,\mu_k)$, where 
\begin{equation*}
\begin{split}
\mu_k&=\mu+\log\frac{\exp(\mu k)-1}{\exp\mu-1}\\
\sigma_k^{\alpha}&=\sigma^{\alpha}\Bigg(1+\sum_{j=1}^{k-1}\Big(1-\frac{\exp(\mu j)-1}{\exp(\mu k)-1}\Big)^{\alpha}\Bigg).
\end{split}
\end{equation*}
\label{t1}
\end{theorem}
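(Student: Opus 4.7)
Since Theorem~\ref{t1} is a specialization of Theorem~\ref{tG}, the inequality $Z_k\leq Y_k$ w.p.1 and the distributional statement $\log Z_k\sim\mathcal{S}(\alpha,\beta,\sigma_k,\mu_k)$ are inherited, so the only real task is to check the closed-form expressions for $\mu_k$ and $\sigma_k^\alpha$ by induction on $k$. Under the choices $c_k=1$, $t_k=k$, $a_{k-1}=\exp\mu_{k-1}$, and $\alpha\neq 1$, the term $b_{k-1}(\mu_{k-1}-\log a_{k-1})$ vanishes and $t_k-t_{k-1}=1$, so the recursion of Theorem~\ref{tG} collapses to $\mu_k=\mu+\log(\exp\mu_{k-1}+1)$ and $\sigma_k^\alpha=b_{k-1}^\alpha\sigma_{k-1}^\alpha+\sigma^\alpha$, with $b_{k-1}=\exp\mu_{k-1}/(\exp\mu_{k-1}+1)$. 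The base case $k=1$ matches the closed forms immediately (the sum for $\sigma_1^\alpha$ being empty).

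For the location parameter, I would observe that the candidate formula yields $\exp\mu_{k-1}=\exp\mu\cdot(\exp(\mu(k-1))-1)/(\exp\mu-1)$, and hence $\exp\mu_{k-1}+1=(\exp(\mu k)-1)/(\exp\mu-1)$ by a one-line manipulation. Feeding this into the simplified recursion reproduces the target expression for $\mu_k$. The hypothesis $\mu\neq 0$ enters only to keep the denominator $\exp\mu-1$ nonzero.

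For the scale parameter, the same identity rewrites $b_{k-1}$ as $(\exp(\mu k)-\exp\mu)/(\exp(\mu k)-1)=1-(\exp\mu-1)/(\exp(\mu k)-1)$, which is exactly the $j=1$ summand of the target formula. I would then substitute the inductive hypothesis for $\sigma_{k-1}^\alpha$ and distribute $b_{k-1}^\alpha$ across the sum; the generic $j$-th summand becomes, after using the cancellation $\exp(\mu k)-\exp\mu=\exp\mu(\exp(\mu(k-1))-1)$, equal to $(1-(\exp(\mu(j+1))-1)/(\exp(\mu k)-1))^\alpha$. Reindexing $j\mapsto j+1$ produces the $j=2,\dots,k-1$ summands; together with the $j=1$ term coming from $b_{k-1}^\alpha\cdot 1$ and the $+\sigma^\alpha$ contributing the leading $1$, this reconstructs the claimed closed form. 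The main obstacle is simply spotting the cancellation above so that $b_{k-1}^\alpha$ distributes cleanly and the sum-index can be shifted; once this is in hand, both inductions reduce to routine algebra with no appeal to anything beyond Theorem~\ref{tG}.
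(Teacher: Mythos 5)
Your proposal is correct and follows essentially the same route as the paper: both reduce the probabilistic content to Theorem \ref{tG} and then verify the closed forms for $\mu_k$ and $\sigma_k^\alpha$ by induction, using the identity $\exp\mu_{k-1}+1=\frac{\exp(\mu k)-1}{\exp\mu-1}$ for the location parameter. The only (cosmetic) difference is in the scale parameter: the paper unrolls the recursion to $\sigma_k^\alpha=\sigma^\alpha\big(1+\sum_{j}(\prod_i b_{k-i})^\alpha\big)$ and evaluates the product telescopically, whereas you keep a one-step induction and distribute $b_{k-1}^\alpha$ across the inductive sum with a reindexing — both hinge on the same cancellation $\exp(\mu k)-\exp\mu=\exp\mu\,(\exp(\mu(k-1))-1)$.
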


\begin{remark}
In the construction of Theorem \ref{t1}, $R_k=\frac{1}{k}Y_k$. Furthermore, $\sigma_k$ is increasing with $k$. If $\mu>0$, then $\mu_k$ is increasing with $k$. If $\mu<0$, then $\mu_k$ is decreasing with $k$. 
\end{remark}

\begin{theorem}
Let $\alpha\neq1$ and $\mu\neq0$. Let $\mathcal{Y}_n$ denote the returns at $t=1$ given $a_{k+1}=\exp\mu_{k+1}$, $c_k=\frac{1}{n}$ and $t_k=\frac{k}{n}$ for $k=0,1,...,n-1$. Then $\mathbb{P}(Z\leq x)\geq\lim_{n\to\infty}\mathbb{P}(\mathcal{Y}_n\leq x)$ for all $x\in\mathbb{R}$, where $\log Z\sim\mathcal{S}(\alpha,\beta,\sigma^*,\mu^*)$ and 
\begin{equation*}
\begin{split}
\mu^*&=\log\frac{\exp\mu-1}{\mu}\\
(\sigma^*)^{\alpha}&=\sigma^{\alpha}\int_0^1\Big(\frac{1-\exp(-\mu x)}{1-\exp(-\mu)}\Big)^{\alpha}dx.
\end{split}
\end{equation*}
\label{tcont}
\end{theorem}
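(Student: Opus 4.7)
The plan is to reduce the statement to Theorem \ref{t1} via a change of time-scale, read off the parameters of the stable lower bound at finite $n$, and then identify the limiting parameters through a Riemann-sum argument.

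First I would introduce the rescaled process $\tilde{X}(s)=X(s/n)$. A direct check of characteristic functions shows that $\log\tilde{X}$ is again a L\'evy alpha-stable process, now with per-unit-time location $\tilde{\mu}=\mu/n$ and per-unit-time scale $\tilde{\sigma}^{\alpha}=\sigma^{\alpha}/n$. Under $\tilde{X}$ the original schedule $(c_k,t_k)=(1/n,k/n)$ becomes $(1/n,k)$, and the rescaling remark then lets me multiply every $c_k$ by $n$ to work instead with $(c_k,t_k)=(1,k)$, leaving both $\mathcal{Y}_n$ and the normalized lower bound unchanged. Theorem \ref{t1} now applies directly, producing $Z_n\leq Y_n^{(\mathrm{new})}$ w.p.1 with $\log Z_n\sim\mathcal{S}(\alpha,\beta,\sigma_n,\mu_n)$ and closed forms for $\sigma_n,\mu_n$ in terms of $\tilde{\mu},\tilde{\sigma},n$. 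Since the total amount invested under the original schedule is $1$, the lower bound on $\mathcal{Y}_n$ is $Z_n/n$, so that $\log(Z_n/n)\sim\mathcal{S}(\alpha,\beta,\sigma_n,\mu_n-\log n)$.

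Next I would pass to the limit. Using $n(\exp(\mu/n)-1)\to\mu$,
\[
\mu_n-\log n \;=\; \frac{\mu}{n}+\log\frac{\exp\mu-1}{n\bigl(\exp(\mu/n)-1\bigr)} \;\longrightarrow\; \log\frac{\exp\mu-1}{\mu}\;=\;\mu^*.
\]
For the scale, pulling $1/n$ out front writes $\sigma_n^{\alpha}$ as a Riemann sum
\[
\sigma_n^{\alpha}\;=\;\frac{\sigma^{\alpha}}{n}\sum_{j=0}^{n-1}\biggl(1-\frac{\exp(\mu j/n)-1}{\exp\mu-1}\biggr)^{\!\alpha}\;\longrightarrow\;\sigma^{\alpha}\int_0^1\biggl(1-\frac{\exp(\mu x)-1}{\exp\mu-1}\biggr)^{\!\alpha}dx.
\]
The algebraic identity $1-\tfrac{\exp(\mu x)-1}{\exp\mu-1}=\tfrac{1-\exp(-\mu(1-x))}{1-\exp(-\mu)}$ together with the substitution $u=1-x$ puts the integral in exactly the form $(\sigma^*)^{\alpha}$ appearing in the statement.

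With both parameters converging ($\sigma_n\to\sigma^*>0$ and $\mu_n-\log n\to\mu^*$), pointwise convergence of the stable characteristic functions gives $Z_n/n\stackrel{d}{\to}Z$. Continuity of the limiting stable law upgrades this to $\mathbb{P}(Z_n/n\leq x)\to\mathbb{P}(Z\leq x)$ at every $x\in\mathbb{R}$, and combined with $\mathcal{Y}_n\geq Z_n/n$ w.p.1 this closes the bound. The main obstacle I anticipate is not any single computation but the careful bookkeeping across the two rescalings (time and investment), plus spotting the $u=1-x$ substitution that matches the resulting integral with the one written in the theorem.
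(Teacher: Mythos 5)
Your proposal follows essentially the same route as the paper: apply Theorem \ref{t1} with the rescaled parameters $\mu/n$ and $\sigma^{\alpha}/n$, pass the location and scale parameters to their limits (the Riemann sum for $\sigma_n^{\alpha}$, an elementary limit for $\mu_n-\log n$), and conclude via convergence of the stable characteristic functions; your $u=1-x$ substitution is just the paper's reindexing $j\mapsto n-j$ done at the integral level. The only piece you omit is an argument that $\lim_{n\to\infty}\mathbb{P}(\mathcal{Y}_n\leq x)$ actually exists (the paper notes $\mathcal{Y}_n$ converges pointwise to an integral of the geometric process), but since your finite-$n$ inequality plus convergence of $\mathbb{P}(Z_n/n\leq x)$ already bounds the $\limsup$, this is a minor gap rather than a flaw in the approach.
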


\begin{remark}
In the construction of Theorem \ref{tcont}, $\mathcal{Y}_n$ is the terminal wealth and return at time 1, since the total invested is 1. Furthermore, the integral needed to compute $\sigma^*$ can be bounded, and it is well-approximated when $\mu$ is close to 0. Observe that for $x\in[0,1]$,
\begin{equation}
\big(1-\exp(-\mu)\big)x\leq 1-\exp(-\mu x)\leq \big(1-\exp(-\mu)\big)x^r,\quad r=\frac{\mu}{\exp\mu-1}.
\label{xxr}
\end{equation}
Proof of the upper bound in \eqref{xxr} is given in Appendix \ref{secA1}. It follows that
\begin{equation*}
\frac{\chi(\mu)}{\alpha+1}\leq \chi(\mu)\int_0^1\Big(\frac{1-\exp(-\mu x)}{1-\exp(-\mu)}\Big)^{\alpha}dx\leq \frac{\chi(\mu)}{r\alpha+1},
\end{equation*}
where $\chi(\mu)=1$ if $\mu\geq0$ and $\chi(\mu)=-1$ otherwise. 
\end{remark}

\subsection{Main Results for Withdrawls}\label{mainW}
For the remainder of this subsection, fix $k\geq1$. In \eqref{recursionW}, $P$ is deterministic and $W_k$ is random. Alternatively, consider the recursion
\begin{equation}
\begin{split}
P^*&=X_1^{-1}\cdot(W_1^*+w_1),\\
W_{j-1}^*&=X_j^{-1}\cdot(W_j^*+w_j),\quad j=2,3,...,k-1,\\
W_{k-1}^*&=w_kX_k^{-1},
\end{split}
\label{recursionW2}
\end{equation}
where $P^*$ is random. Then for each $\omega\in\Omega$, $P^*(\omega)>P\iff W_k(\omega)<0$. Note that the if and only if holds because $W_k$ is a strictly increasing function of $P$. Theorem \ref{tG*} provides a lower bound for $P^*$, where parameters are given recursively. Theorem \ref{t1*} provides a lower bound for $P^*$ when every withdrawal is the same amount, and withdrawals are made at equidistant times. Furthermore, Theorem \ref{t1*} expresses parameters of the lower bound in closed form. Theorem \ref{tW} uses the lower bound of $P^*$ to construct an upper bound for $\mathbb{P}(W_k\geq0)$. Recall that $W_k\geq0$ if and only if withdrawals $w_1,...,w_k$ can be made. Remark \ref{rcont} discusses continuous withdrawals.

\begin{definition}
Set $Z_{k-1}^*=W_{k-1}^*$ and 
\begin{equation*}
Z_{j-1}^*=X_j^{-1}(a_j+w_j)\Bigg(\frac{Z_j^*}{a_j}\Bigg)^{b_j},\quad j=k-1,...,1,
\end{equation*}
where $a_j>0$ and $b_j=\frac{a_j}{a_j+w_j}$.
\label{defz*}
\end{definition}

\begin{theorem}
$Z_0^*\leq P^*$ w.p.1 and $\log Z_0^*\sim\mathcal{S}(\alpha,-\beta,\sigma_0,\mu_0)$, where parameters are given recursively via 
\begin{equation*}
\begin{split}
\sigma_{k-1}^\alpha&=(w_k\sigma)^{\alpha}(t_k-t_{k-1}),\\
\mu_{k-1}&=\begin{cases}
-w_k\mu(t_k-t_{k-1})&\alpha\neq1\\
-w_k\mu(t_k-t_{k-1})+\frac{2}{\pi}\beta w_k\sigma\sqrt[\alpha]{t_k-t_{k-1}}\log w_k&\alpha= 1
\end{cases},\\
m_{j-1}&=b_j(\mu_j-\log a_j)+\log(a_j+w_j)-\mu(t_j-t_{j-1}),\\
\mu_{j-1}&=\begin{cases}
m_{j-1}&\alpha\neq1\\
m_{j-1}+\frac{2}{\pi}\beta b_j\sigma_j\log b_j&\alpha=1
\end{cases},\\
\sigma_{j-1}^{\alpha}&=(b_j\sigma_j)^{\alpha}+\sigma^{\alpha}(t_j-t_{j-1}).
\end{split}
\end{equation*}
\label{tG*}
\end{theorem}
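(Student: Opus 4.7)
My plan is to prove Theorem \ref{tG*} by reverse induction on $j$, decrementing from $j=k-1$ down to $j=0$. Structurally this mirrors the forward induction behind Theorem \ref{tG}: Definition \ref{defz*} turns the backward recursion \eqref{recursionW2} into the same shape as Definition \ref{defz}, with $X_j$ replaced by $X_j^{-1}$ and $c_{k-1}$ replaced by $w_j$. Hence the same two ingredients should do all the work — weighted AM--GM for the pathwise inequality and the algebra of L\'evy alpha-stable distributions (negation, positive rescaling, deterministic shift, and independent sums) for the parameter recursions. The base case $j=k-1$ uses $Z_{k-1}^*=W_{k-1}^*$ directly; I would obtain the law of $\log Z_{k-1}^*$ from $\log X_k\sim\mathcal{S}(\alpha,\beta,\sigma(t_k-t_{k-1})^{1/\alpha},\mu(t_k-t_{k-1}))$ via negation (flipping $\beta$ and the location) followed by the $w_k$ rescaling, yielding the stated $\sigma_{k-1}$ and $\mu_{k-1}$ in both the $\alpha\neq1$ and $\alpha=1$ branches.

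For the inductive step, assume $Z_j^*\le W_j^*$ w.p.1 and $\log Z_j^*\sim\mathcal{S}(\alpha,-\beta,\sigma_j,\mu_j)$. Weighted AM--GM with weights $b_j=a_j/(a_j+w_j)$ and $1-b_j=w_j/(a_j+w_j)$ gives
\begin{equation*}
b_j(W_j^*/a_j)+(1-b_j)\cdot 1\ge (W_j^*/a_j)^{b_j},
\end{equation*}
equivalently $W_j^*+w_j\ge(a_j+w_j)(W_j^*/a_j)^{b_j}$. Multiplying by $X_j^{-1}>0$ and applying the inductive hypothesis together with monotonicity of $x\mapsto(x/a_j)^{b_j}$ yields $W_{j-1}^*\ge Z_{j-1}^*$ w.p.1. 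For the distribution, write
\begin{equation*}
\log Z_{j-1}^*=-\log X_j+\log(a_j+w_j)+b_j(\log Z_j^*-\log a_j).
\end{equation*}
By the independent-increments property of the L\'evy alpha-stable process, $\log X_j$ (a function of the increment over $[t_{j-1},t_j]$) is independent of $\log Z_j^*$ (a function of $X_{j+1},\ldots,X_k$). Now apply (i) negation of $\log X_j$ to get $\mathcal{S}(\alpha,-\beta,\sigma(t_j-t_{j-1})^{1/\alpha},-\mu(t_j-t_{j-1}))$; (ii) positive rescaling of $\log Z_j^*$ by $b_j\in(0,1)$, which multiplies both $\sigma_j$ and $\mu_j$ by $b_j$; (iii) the deterministic shift $\log(a_j+w_j)-b_j\log a_j$. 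Summing the two independent stable components (common $\alpha$, common skewness $-\beta$) gives $\sigma^\alpha$-additivity of scales and additivity of locations, recovering $\sigma_{j-1}^\alpha=(b_j\sigma_j)^\alpha+\sigma^\alpha(t_j-t_{j-1})$ and $\mu_{j-1}=m_{j-1}$.

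The main obstacle is the $\alpha=1$ case, where positive rescaling is nonlinear in the location: if $Z\sim\mathcal{S}(1,\beta,\sigma,\mu)$ then $cZ\sim\mathcal{S}(1,\beta,c\sigma,c\mu-(2/\pi)\beta c\sigma\log c)$ for $c>0$. I would have to carry this correction through both the $w_k$ rescaling in the base case and each $b_j$ rescaling in the inductive step, being careful that after the initial negation of $\log X_j$ the active skewness is $-\beta$, so the correction flips sign and contributes $+(2/\pi)\beta b_j\sigma_j\log b_j$ in the inductive step and $+(2/\pi)\beta w_k\sigma(t_k-t_{k-1})^{1/\alpha}\log w_k$ in the base case, matching the stated $\alpha=1$ branches exactly. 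Keeping these signs straight across negation, rescaling, and the stable-sum step is the error-prone bookkeeping; once it is done, both branches of the recursion close as claimed.
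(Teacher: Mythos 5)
Your proposal is correct and follows essentially the same route as the paper: the paper proves Theorem \ref{tG*} by observing that $\log(X_j^{-1})\sim\mathcal{S}(\alpha,-\beta,\sigma\sqrt[\alpha]{t_j-t_{j-1}},-\mu(t_j-t_{j-1}))$ and applying Theorem \ref{tG} to the transformed recursion \eqref{recursionW2}, and Theorem \ref{tG} is itself established by exactly the reverse of the induction you describe, with your weighted AM--GM step being precisely Lemma \ref{llb} and your negation/rescaling/shift/sum bookkeeping being Lemma \ref{stableT}. The only difference is cosmetic: you unroll the reduction into a self-contained induction (and supply a proof of Lemma \ref{llb}, which the paper imports from a reference) rather than citing Theorem \ref{tG} directly.
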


\begin{remark}
Like in Section \ref{mainPI}, $a_j$ can be any positive real number, and a good choice for $a_j$ is $a_j=\exp\mu_j$. 
\end{remark}

\begin{theorem}
Let $\alpha\neq1$ and $\mu\neq0$. Suppose $a_j=\exp\mu_j$ for $j=1,...,k-1$, and $w_j=1$, $t_j=j$ for $j=1,...,k$. Then $Z_0^*\leq P^*$ w.p.1 and $\log Z_0^*\sim\mathcal{S}(\alpha,-\beta,\sigma_0,\mu_0)$, where 
\begin{equation*}
\begin{split}
\mu_0&=-\mu+\log\frac{\exp(-\mu k)-1}{\exp(-\mu)-1}\\
\sigma_0^{\alpha}&=\sigma^{\alpha}\Bigg(1+\sum_{j=1}^{k-1}\Big(1-\frac{\exp(-\mu j)-1}{\exp(-\mu k)-1}\Big)^{\alpha}\Bigg).
\end{split}
\end{equation*}
\label{t1*}
\end{theorem}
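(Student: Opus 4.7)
The plan is to take Theorem \ref{tG*} as given and verify its specialization. Under the assumed constants $w_j=1$, $t_j=j$, $\alpha\neq 1$, and $a_j=\exp\mu_j$, the distributional claim $\log Z_0^* \sim \mathcal{S}(\alpha, -\beta, \sigma_0, \mu_0)$ and the pointwise bound $Z_0^* \leq P^*$ are inherited directly from Theorem \ref{tG*}. It remains to solve the location and scale recursions in closed form.

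The base case collapses to $\mu_{k-1}=-\mu$ and $\sigma_{k-1}^\alpha = \sigma^\alpha$. Because $a_j=\exp\mu_j$ kills the term $b_j(\mu_j-\log a_j)$, the location recursion simplifies to $\mu_{j-1} = \log(e^{\mu_j}+1) - \mu$. The key move is the substitution $S_j := e^{\mu_j}$, which linearizes the recursion to $S_{j-1} = e^{-\mu}(S_j+1)$, with $S_{k-1}=e^{-\mu}$. A short induction (equivalently, a geometric-series argument, which is where the assumption $\mu\neq 0$ is used) then yields
\begin{equation*}
S_j = \sum_{n=1}^{k-j} e^{-n\mu} = e^{-\mu}\,\frac{e^{-(k-j)\mu}-1}{e^{-\mu}-1},
\end{equation*}
and setting $j=0$ and taking the logarithm gives the claimed $\mu_0$.

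For the scale, note that $b_j = S_j/(S_j+1)$, and from the linearized recursion $S_j+1 = e^{\mu}S_{j-1}$, so $b_j = e^{-\mu}\,S_j/S_{j-1}$. Unrolling $\sigma_{j-1}^\alpha = b_j^\alpha\sigma_j^\alpha + \sigma^\alpha$ from $j=k-1$ down to $j=1$ gives
\begin{equation*}
\sigma_0^\alpha = \sigma^\alpha\Bigg(1 + \sum_{i=1}^{k-1}\Big(\prod_{j=1}^i b_j\Big)^\alpha\Bigg).
\end{equation*}
The identity for $b_j$ makes the product telescope to $\prod_{j=1}^i b_j = e^{-i\mu}\,S_i/S_0$. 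Plugging in the closed form for $S_i/S_0$ and simplifying $e^{-i\mu}(e^{-(k-i)\mu}-1) = e^{-k\mu}-e^{-i\mu}$ identifies $\prod_{j=1}^i b_j$ with $1 - (e^{-i\mu}-1)/(e^{-k\mu}-1)$, which is exactly the summand in the stated formula.

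The main obstacle is really just spotting the substitution $S_j=e^{\mu_j}$ and recognizing that the relation $S_j+1 = e^{\mu}S_{j-1}$ causes the $b_j$-product to telescope; without this, both recursions look genuinely nonlinear. The case $k=1$ is a convenient sanity check: both sums are empty, the formula returns $\mu_0=-\mu$ and $\sigma_0^\alpha=\sigma^\alpha$, matching the base case.
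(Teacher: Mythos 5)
Your proof is correct: the specialization of the Theorem \ref{tG*} recursions under $w_j=1$, $t_j=j$, $a_j=\exp\mu_j$ is exactly as you state, the substitution $S_j=e^{\mu_j}$ does linearize the location recursion to $S_{j-1}=e^{-\mu}(S_j+1)$ with $S_{k-1}=e^{-\mu}$, and the identity $S_j+1=e^{\mu}S_{j-1}$ makes $\prod_{j=1}^i b_j$ telescope to $e^{-i\mu}S_i/S_0=1-(e^{-i\mu}-1)/(e^{-k\mu}-1)$, which is the stated summand. The route differs from the paper's, though. The paper disposes of Theorem \ref{t1*} in one line: since $\log(X_j^{-1})\sim\mathcal{S}(\alpha,-\beta,\sigma\sqrt[\alpha]{t_j-t_{j-1}},-\mu(t_j-t_{j-1}))$, the reversed recursion \eqref{recursionW2} is formally an instance of the investment recursion \eqref{recursion} with $X_j$ replaced by $X_j^{-1}$, so Theorem \ref{t1} applies verbatim with the substitutions $\mu\leftarrow-\mu$, $\beta\leftarrow-\beta$, and the closed forms drop out immediately. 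You instead redo the closed-form derivation from scratch starting at Theorem \ref{tG*}; the computational content (induction on the location recursion, telescoping product for the scale) essentially reproduces the paper's proof of Theorem \ref{t1} with the sign of $\mu$ flipped. What your version buys is a self-contained verification that does not rely on recognizing the reindexing/inversion trick, and your $S_j=e^{\mu_j}$ substitution is arguably cleaner than the paper's direct induction on $\mu_k$; what the paper's version buys is economy, since the work is done once in Theorem \ref{t1} and reused. Either way the argument is sound.
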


\begin{theorem}
The probability of being able to make withdrawals $w_1,...,w_k$ is no greater than the probability that $Z_0^*\leq P$, i.e. $\mathbb{P}(W_k\geq0)\leq\mathbb{P}(Z_0^*\leq P)$.
\label{tW}
\end{theorem}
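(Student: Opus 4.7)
The proof is essentially a bookkeeping argument that chains together two facts already in hand: the deterministic-versus-random duality between the recursions \eqref{recursionW} and \eqref{recursionW2}, and the pathwise lower bound $Z_0^*\leq P^*$ from Theorem \ref{tG*}.

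The plan is to first record the event equality $\{W_k\geq 0\}=\{P^*\leq P\}$. This is just the contrapositive of the $\iff$ stated in the paragraph preceding Theorem \ref{tG*}, namely $P^*(\omega)>P\iff W_k(\omega)<0$. The justification, which should be spelled out briefly, is that iterating \eqref{recursionW0} expresses $W_k$ as a strictly increasing affine function of $P$ with (random) slope $\prod_{j=1}^k X_j>0$; since $P^*$ is defined in \eqref{recursionW2} as the unique value that makes the corresponding $W_k^*$ vanish, equivalently as the $P$ for which $W_k(\omega)=0$, monotonicity in $P$ gives $W_k(\omega)\geq 0\iff P^*(\omega)\leq P$.

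Next I would invoke Theorem \ref{tG*} to obtain $Z_0^*\leq P^*$ with probability one, which immediately implies the pathwise inclusion of events $\{P^*\leq P\}\subseteq\{Z_0^*\leq P\}$ (up to a null set). Taking probabilities and combining with the event equality from the first step yields
\begin{equation*}
\mathbb{P}(W_k\geq 0)=\mathbb{P}(P^*\leq P)\leq\mathbb{P}(Z_0^*\leq P),
\end{equation*}
which is the claimed bound.

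There is no real obstacle here; the only thing that requires a sentence of care is the monotonicity of $W_k$ as a function of the initial investment $P$, and the observation that $P^*$ defined by the backwards recursion \eqref{recursionW2} is precisely the critical $P$ at which $W_k$ changes sign. Everything else is a one-line inclusion argument together with a direct appeal to Theorem \ref{tG*}.
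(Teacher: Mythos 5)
Your proposal is correct and matches the paper's own argument: the paper proves Theorem \ref{tW} from exactly the two facts you cite, namely the equivalence $P^*(\omega)\leq P\iff W_k(\omega)\geq 0$ (justified, as you do, by strict monotonicity of $W_k$ in $P$) and the pathwise bound $Z_0^*\leq P^*$ from Theorem \ref{tG*}. Your write-up simply spells out the event inclusion and the role of $P^*$ as the critical initial investment a bit more explicitly than the paper does.
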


\begin{remark}
\label{rcont}
Let $\alpha\neq1$ and $\mu\neq0$. Suppose one wishes to make continuous withdrawals at a constant rate from time 0 to time 1, such that a total of exactly 1 unit has been withdrawn by time 1. Given the initial investment of $P$ at time 0, the probability of success is no greater than $\mathbb{P}(Z^*\leq P)$, where $\log Z^*\sim\mathcal{S}(\alpha,-\beta,\sigma^*,\mu^*)$ and $\sigma^*,\ \mu^*$ are computed as in Theorem \ref{tcont}, except with the substitution $\mu\leftarrow -\mu$. This result is just a combination of Theorems \ref{tcont} and \ref{tW}, using \eqref{recursionW2}.
\end{remark}

\section{General Applications}\label{genApp}
\subsection{Lump sum discount}
The lump sum discount, introduced in \cite{oldDCA}, can be evaluated using Theorems \ref{tG}, \ref{t1} or \ref{tcont}. Here, it is evaluated for Theorem \ref{tcont}. The idea behind the lump sum discount is to identify a lump sum investment having a terminal wealth that is no better (in distribution) than the terminal wealth of continuous DCA. 

The goal is to find a lump sum investment $x$ that, after $s$ time units, matches the terminal wealth distribution of the lower bound for continuous DCA. In particular, require
\begin{equation*}
x\cdot\frac{X(s)}{X(0)}\stackrel{d}{=}Z,\quad \log Z\sim\mathcal{S}(\alpha,\beta,\sigma^*,\mu^*),
\end{equation*}
where $\sigma^*$ and $\mu^*$ are as in Theorem \ref{tcont}. It follows that 
\begin{equation}
\begin{split}
s&=\int_0^1\Big(\frac{1-\exp(-\mu x)}{1-\exp(-\mu)}\Big)^{\alpha}dx,\\
x&=\frac{\exp\mu-1}{\mu\exp(\mu s)}.
\end{split}
\label{xsdef}
\end{equation}
Since the continuous DCA of Theorem \ref{tcont} invests a total of 1 currency unit over 1 unit of time, the lump sum discount is $(x,s)$, indicating a lump sum investment of $x$ currency units for $s$ time units has no better terminal wealth than the continuous DCA. The result is also scalable, meaning if $d\$$ is invested over $t$ time units using continuous DCA, then a lump sum investment of $xd\$$ for $st$ time units has a terminal wealth that is no better than the continuous DCA. Figure \ref{fig:xs} shows $x$ and $s$ for various $\mu$ and $\alpha$. In general, investing a similar amount in lump sum for less than half the time is no better than using continuous DCA. 
\begin{figure}[H]
  \includegraphics[width=\linewidth]{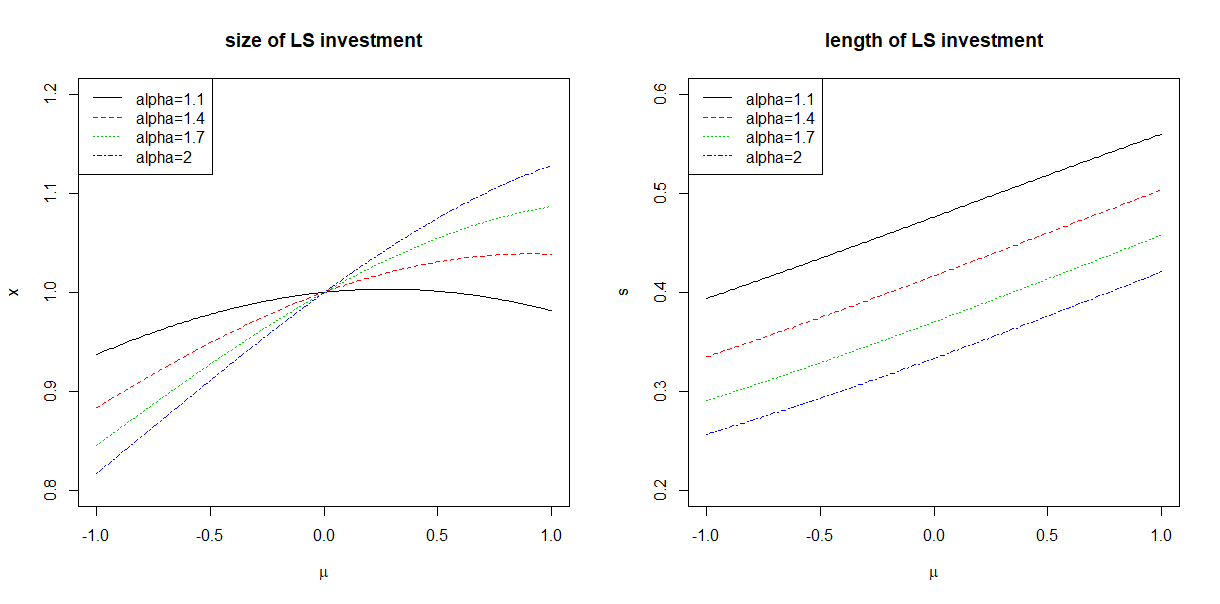}
  \caption{Illustrates $x$ and $s$ from the lump sum discount. $x$ and $s$ were computed using \eqref{xsdef}. Romberg integration was used to compute $s$.}
  \label{fig:xs}
\end{figure}

\subsection{Discrete Withdrawls}\label{genW}
Suppose an investor wishes to make $k$ equal withdrawals at equidistant time steps, with a given level of confidence. Theorems \ref{t1*} and \ref{tW} make it possible to specify a necessary initial investment $P$, meaning the investor must invest at least $P$ at time 0 in order to achieve the given level of confidence. For the remainder of this subsection, use $P$ to denote that necessary initial investment.

The procedure is as follows. Let $C\in(0,1)$ denote the given level of confidence. Set $C=\mathbb{P}(Z_0^*\leq P)$, and solve for $P$, where $Z_0^*$ is as in Theorem \ref{t1*}. By Theorem \ref{tW}, the probability of being able to make the $k$ withdrawals is no greater than $C$. If an initial investment $Q$ is made instead of $P$, and $Q<P$, then $\mathbb{P}(Z_0^*\leq Q)<C$. It follows from Theorem \ref{tW} that the probability of being able to make the $k$ withdrawals is less than $C$. Therefore, $Q$ is not a sufficient initial investment. 

Without loss of generality, suppose the investor makes withdrawals as in Theorem \ref{t1*}. Then 1 unit is withdrawn at each time $1,2,...,k$. An interesting result to pursue is: Given $C$, $k$, $\alpha$ and $\beta$, provide sufficient $\mu$ and $\sigma$ such that $P\geq k$, meaning the initial investment must be at least the total amount intended to be withdrawn. This result is achieved for $C=.95$, $k\in\{2,3,...,60\}$, $\alpha\in\{1+.01n:\ n=1,...,100\}$ and $\beta=0$. Since $C$ and $\beta$ are fixed, sufficient ratios $\mu/\sigma$ are given as a function of $k$ and $\alpha$ in \eqref{mskalpha}. Note that $\mu\in\{.001n:\ n=1,...,500\}$ is required for the result to hold. The $\Lambda_{ij}$ are provided in \eqref{beta}, where the matrix $\Lambda$ stores $\Lambda_{ij}$ in its $i$th row and $j$th column. The interpretation of the result is as follows: If $\mu\in\{.001n:\ n=1,...,500\}$, $k\in\{2,3,...,60\}$, $\alpha\in\{1+.01n:\ n=1,...,100\}$, $\beta=0$ and \eqref{mskalpha} holds, then in order to make $k$ equal withdrawals at each time $1,2,...,k$, with 95\% confidence, it is necessary for the initial investment to be at least $k$ times the amount of each withdrawal.

\begin{equation}
\begin{split}
\frac{\mu}{\sigma}&\leq f_0(\alpha)+f_1(\alpha)\sqrt{k}+f_2(\alpha)\sqrt[4]{k}+f_3(\alpha)\sqrt[8]{k},\\
f_i(\alpha)&=\sum_{j=0}^5\Lambda_{ij}\alpha^j.
\end{split}
\label{mskalpha}
\end{equation}

\footnotesize
\begin{equation}
\begin{split}
\Lambda&=\begin{bmatrix}
-2484.4421&	7053.6469&	-7910.3438&	4425.8516&	-1236.8914&	138.1989\\
203.2079&	-551.2414&	597.5744&	-324.7764&	88.4613&	-9.6564\\
-2681.4779&	7442.1643&	-8232.7022&	4561.2428&	-1265.6221&	140.6711\\
5061.6465&	-14204.7730&	15842.4735&	-8838.1550&	2467.1713&	-275.6974
\end{bmatrix}
\end{split}
\label{beta}
\end{equation}
\normalsize

The bound \eqref{mskalpha} was constructed via computer using Algorithm \ref{alg1}. First, $C$ and $\beta$ were fixed at $C=.95$ and $\beta=0$. Then, for each $\alpha\in\{1+.01n:\ n=1,...,100\}$, the returned $s_k$ were fit to the polynomial 
\begin{equation*}
\hat{f}_0(\alpha)+\hat{f}_1(\alpha)\sqrt{k}+\hat{f}_2(\alpha)\sqrt[4]{k}+\hat{f}_3(\alpha)\sqrt[8]{k}.
\end{equation*}
Last, each $\hat{f}_i(\alpha)$ was fit to the polynomial 
\begin{equation*}
\Lambda_{i0}+\Lambda_{i1}\alpha+\Lambda_{i2}\alpha^2+\Lambda_{i3}\alpha^3+\Lambda_{i4}\alpha^4+\Lambda_{i5}\alpha^5,
\end{equation*}
so that \eqref{mskalpha} implies $P\geq k$. Figure \ref{fig:sharpe} illustrates the upper bound \eqref{mskalpha} for different $\alpha$. This procedure can also be executed for different $C$ and $\beta$.

\begin{algorithm}
\caption{Build $\mu/\sigma$ upper bound}
\label{alg1}
\begin{algorithmic}
\Require $C\in(0,1)$, $\beta\in[-1,1]$, $Z\sim\mathcal{S}(\alpha,-\beta,1,0)$
\Require $\alpha\in\{1+.01n:\ n=1,...,100\}$
\State $M\gets\{.001n:\ n=1,...,500\}$ \Comment{set of $\mu$}
\State $K\gets\{2,3...,60\}$ \Comment{set of $k$}
\State $\mathbb{P}(Z\leq q)\gets C$  \Comment{solve for $q$}
\For{$k\in K$}
\State $s_k\gets 20$ \Comment{initial ratio $\frac{\mu}{\sigma}$}
\While{$\exists(\mu,\sigma)\in \{(\mu,\frac{\mu}{s_k}):\ \mu\in M\}$ s.t. $\exp(\mu_0+\sigma_0q)<k$} 
\State $s_k\gets s_k-.01$ \Comment{see Theorem \ref{t1*} for $\mu_0,\sigma_0$}
\EndWhile
\EndFor\\
\Return{$s_k,\ k\in K$}
\end{algorithmic}
\end{algorithm}

\begin{figure}[H]
  \includegraphics[width=\linewidth]{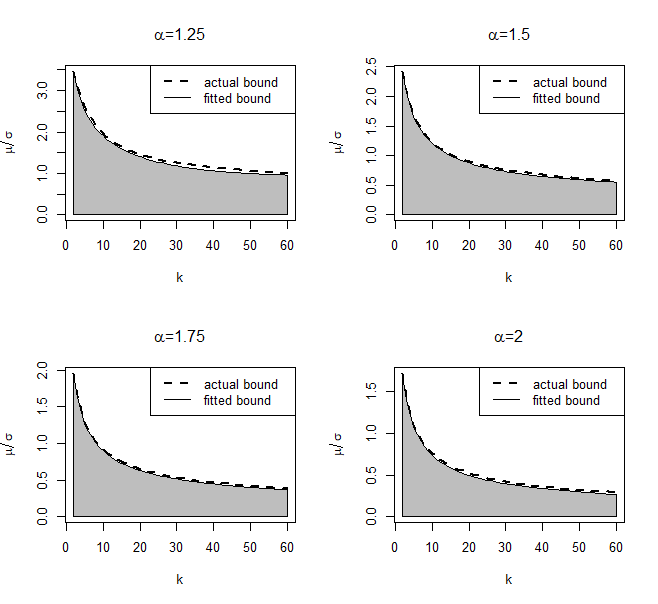}
  \caption{The line and shaded area illustrate \eqref{mskalpha}. The dashed line is the returned $s_k$ from Algorithm \ref{alg1} when $C=.95$ and $\beta=0$.}
  \label{fig:sharpe}
\end{figure}

\subsection{Continuous Withdrawls}\label{conW}
Suppose an investor wishes to continuously withdraw a total of 1 currency unit from time 0 to time 1, at a constant rate, with a given level of confidence. Remark \ref{rcont} makes it possible to specify a necessary initial investment $P$, meaning the investor must invest at least $P$ to achieve the given level of confidence. For the remainder of this subsection, use $P$ to denote that necessary initial investment.

The procedure is as follows. Let $C\in(0,1)$ denote the given level of confidence. Set $C=\mathbb{P}(Z^*\leq P)$, and solve for $P$, where $Z^*$ is as in Remark \ref{rcont}. The probability of being able to withdraw the 1 unit is no greater than $C$. Any initial investment less than $P$ results in a confidence level less than $C$. 
\begin{figure}[H]
  \includegraphics[width=\linewidth]{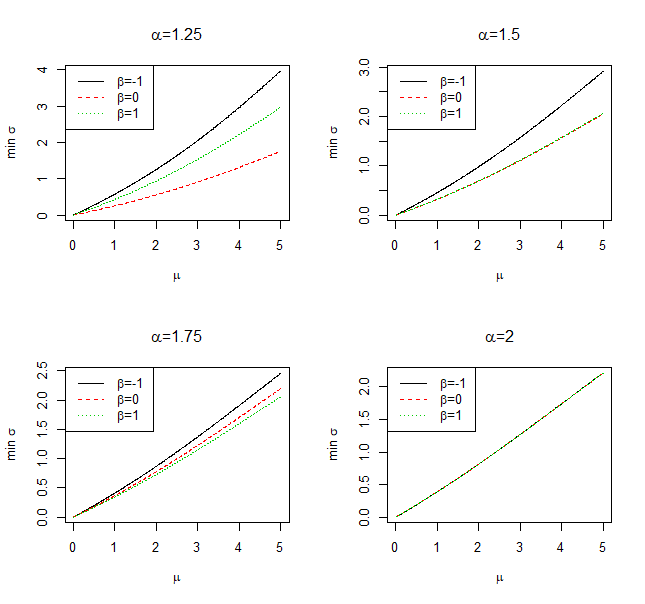}
  \caption{Illustrates $\min\sigma=\mu/s_{\mu}$ as a function of $\mu$, where $s_{\mu}$ is returned from Algorithm \ref{alg2}. $C$ is fixed at .95, while $\alpha$ and $\beta$ vary.}
  \label{fig:minsigma}
\end{figure}

An interesting result to pursue is: Given $C$, $\alpha$, $\beta$ and $\mu$, provide sufficient $\sigma$ such that $P\geq 1$, meaning the initial investment must be at least the total amount intended to be withdrawn. Algorithm \ref{alg2} computes $s_{\mu}$ for each $\mu\in\{.01n:\ n=1,...,500\}$, which is an upper bound on $\mu/\sigma$. It follows that $\mu/s_{\mu}$ is a lower bound on $\sigma$. Figure \ref{fig:minsigma} illustrates the lower bound on $\sigma$ for $C=.95$ and various $\alpha$ and $\beta$.

\begin{algorithm}
\caption{Build $\mu/\sigma$ upper bound}
\label{alg2}
\begin{algorithmic}
\Require $C\in(0,1)$, $\alpha\in(0,1]$, $\beta\in[-1,1]$, $Z\sim\mathcal{S}(\alpha,-\beta,1,0)$
\State $M\gets\{.01n:\ n=1,...,500\}$ \Comment{set of $\mu$}
\State $\mathbb{P}(Z\leq q)\gets C$  \Comment{solve for $q$}
\For{$\mu\in M$}
\State $s_{\mu}\gets 20$ \Comment{initial ratio $\frac{\mu}{\sigma}$}
\While{$\exp(\mu^*+\sigma^*q)<1$, where $(\mu,\sigma)=(\mu,\frac{\mu}{s_{\mu}})$} 
\State $s_{\mu}\gets s_{\mu}-.01$\Comment{see Remark \ref{rcont} for $\mu^*,\sigma^*$}
\EndWhile
\EndFor\\
\Return{$s_{\mu},\ \mu\in M$}
\end{algorithmic}
\end{algorithm}

\section{Applications with Data}\label{dataApp}
\subsection{Data}\label{sec51}
As in \cite{oldDCA}, main results are applied to annual data from the S\&P Composite Index from 1871 to 2020. The index consists of large US companies, and each company's weight is proportional to its market capitalization. The data was taken from \url{http://www.econ.yale.edu/~shiller/data.htm} and is collected for easy access at \url{https://github.com/HaydenBrown/Investing}. 

The number of companies in the US stock market has increased significantly from 1871 to 2020. In order to account for this increase, the data is split into three indexes, each covering a different time interval. The indexes are Cowles and Associates from 1871 to 1926, Standard \& Poor 90 from 1926 to 1957 and Standard \& Poor 500 from 1957 to 2020. Cowles and Associates index is a backward extension of the S\&P 90. The S\&P 90 consists of 90 companies, and the S\&P 500 consists of 500 companies.
\begin{table}[h]
\begin{center}
\caption{Data variable descriptions}
\label{table1}
\begin{tabular}{ |c|l| } 
\hline
\textbf{Notation} & \textbf{Description} \\
 \hline
 I & average monthly close of the S\&P composite index \\ 
 \hline
 D & dividend per share of the S\&P composite index \\ 
 \hline
 C & January consumer price index \\ 
 \hline
\end{tabular}
\end{center}
\end{table}

As in \cite{oldDCA}, the data is transformed so that annual returns incorporate dividends and are adjusted for inflation. In particular, returns are computed using the consumer price index, the S\&P Composite Index price and the S\&P Composite Index dividend. Use the subscript $n$ to denote the $n$th year of $C$, $S$ and $D$ from Table \ref{table1}. The return for year $n$ is computed as $\frac{I_{n+1}+D_n}{I_n}\cdot\frac{C_n}{C_{n+1}}$.

\subsection{Set-up}\label{setup}
In order to apply main results, the assumption that S\&P annual log-returns arise from a L\'evy alpha-stable process needs to be justified. This is done in two parts: first by establishing independence of annual log-returns, and second by establishing that annual log-returns follow a L\'evy alpha-stable distribution.

Independence of S\&P annual returns is established in \cite{oldDCA}. The L\'evy alpha-stable distribution of log-returns is established by visual inspection of quantile-quantile plots. Parameters are estimated using the iterative Koutrouvelis regression method with initial estimate $\mathcal{S}(1.91,1,.115,.0658)$ \cite{koutrouvelis1980regression,koutrouvelis1981iterative}. The resulting estimate for parameters is $\mathcal{S}(1.89,1.00,.110,.0658)$. Figure \ref{fig:qq} shows the quantiles of the estimated distribution vs the empirical quantiles of the log-returns. For comparison, Figure \ref{fig:qq} also shows the quantiles of the Normal distribution with mean $.0658$ and standard deviation $.169$ (used in \cite{oldDCA}) vs the empirical quantiles of the log-returns. 
\begin{figure}[H]
  \includegraphics[width=\linewidth]{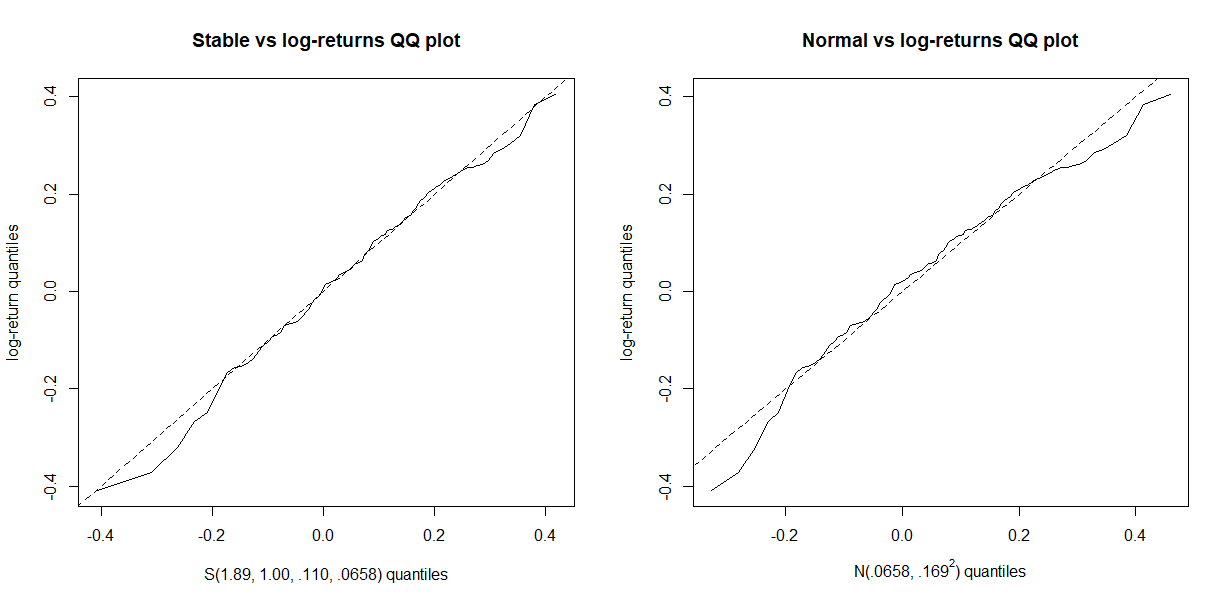}
  \caption{Left: The quantile-quantile plot of annual S\&P log-returns vs $\mathcal{S}(1.89,1.00,.110,.0658)$. Right: The quantile-quantile plot of annual S\&P log-returns vs $\mathcal{N}(.0658,.169^2)$.}
  \label{fig:qq}
\end{figure}
The distribution $\mathcal{S}(1.89,1.00,.110,.0658)$ offers a tighter quantile fit compared to the Normal distribution considered in \cite{oldDCA}. Note that the unit on the time domain is years, and applications using S\&P log-returns use only annual investment and withdrawals (i.e. $t_k=k$ for $k=0,1,...$).

\subsection{DCA quantiles}
In \cite{oldDCA}, $\mathcal{N}(.0658,.169^2)$ provided the best fit for annual log-returns. Section \ref{mainPI} generalizes the results of \cite{oldDCA}, allowing $\alpha<2$. In section \ref{setup}, $\mathcal{S}(1.89,1.00,.110,.0658)$ is given as a better fit for log-returns compared to $\mathcal{N}(.0658,.169^2)$. The goal here is to compare the lower bound of Theorem \ref{t1} when log-returns follow $\mathcal{S}(1.89,1.00,.110,.0658)$ versus $\mathcal{N}(.0658,.169^2)$. Figure \ref{fig:qqLB} shows the quantiles of the log lower bound when log-returns follow $\mathcal{S}(1.89,1.00,.110,.0658)$ versus $\mathcal{N}(.0658,.169^2)$.
\begin{figure}[H] 
  \includegraphics[width=\linewidth]{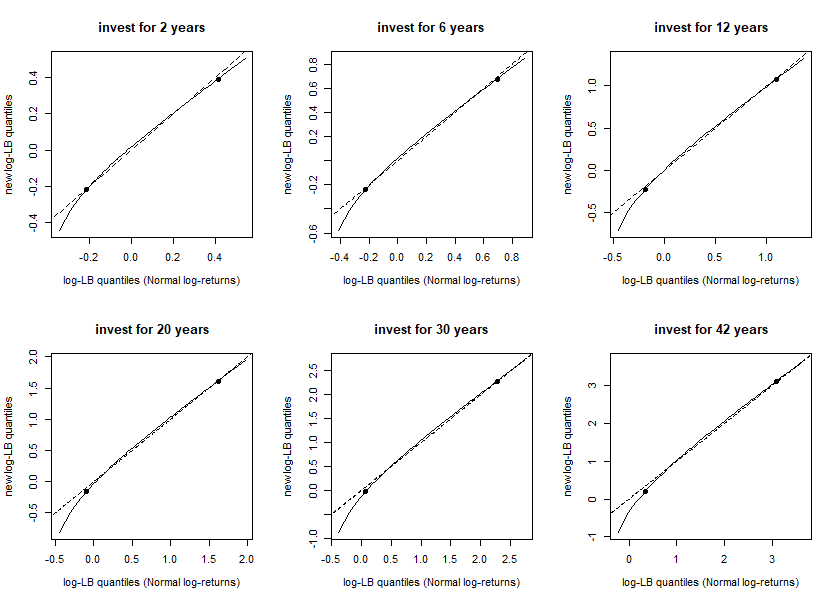}
  \caption{Quantiles .01 through .99 are computed for $\log Z_k$ of Theorem \ref{t1} using a range of investment lengths ($k=2,6,12,20,30,42$). The x-axis shows quantiles from \cite{oldDCA}, which uses $\mathcal{N}(.0658,.169^2)$ log-returns. The y-axis shows quantiles using $\mathcal{S}(1.89,1.00,.110,.0658)$ log-returns. The dashed line indicates where the curve should be if quantiles are identical. Points indicate the .05 and .95 quantiles.}
\label{fig:qqLB}
\end{figure}
In Figure \ref{fig:qqLB}, observe that quantiles are nearly identical except the quantiles under .05. When log-returns follow $\mathcal{S}(1.89,1.00,.110,.0658)$, the quantiles under .05 are less than when log-returns follow $\mathcal{N}(.0658,.169^2)$. Thus, taking into account the skewness and wider shape ($\alpha<2$) of log-returns leads to more significant downside risk compared to Normal log-returns. However, if a decision is based on quantiles .05 or greater, then the decision will likely be invariant to the choice of whether log-returns follow $\mathcal{S}(1.89,1.00,.110,.0658)$ versus $\mathcal{N}(.0658,.169^2)$.

\subsection{Discrete Withdrawls}
Algorithm \ref{alg1} is executed for $C=.95$, $\alpha=1.89$ and $\beta=1$. If $\mu\in\{.001n:\ n=1,...,500\}$, $k\in\{2,3,...,60\}$ and \eqref{ubms} holds, then in order to make $k$ equal withdrawals at each time $1,2,...,k$, with 95\% confidence, it is necessary for the initial investment to be at least $k$ times the amount of each withdrawal. 
\begin{equation}
\frac{\mu}{\sigma}\leq19.8070-0.7709\sqrt{k}+12.9209\sqrt[4]{k}-29.6668\sqrt[8]{k}.
\label{ubms}
\end{equation}
Figure \ref{ubms} shows that the bound \eqref{ubms} is very close to the $s_k$ returned by Algorithm \ref{alg1}. It has been verified that $\mu/\sigma=.066/.110$ satisfies \eqref{ubms} for $k=2,3,...,16$. Thus, in order to make $2\leq k\leq16$ equal, annual withdrawals from the S\&P Composite Index, with 95\% confidence, it is necessary for the initial investment to be at least $k$ times the amount of each withdrawal.
\begin{figure}[H]
  \includegraphics[width=\linewidth]{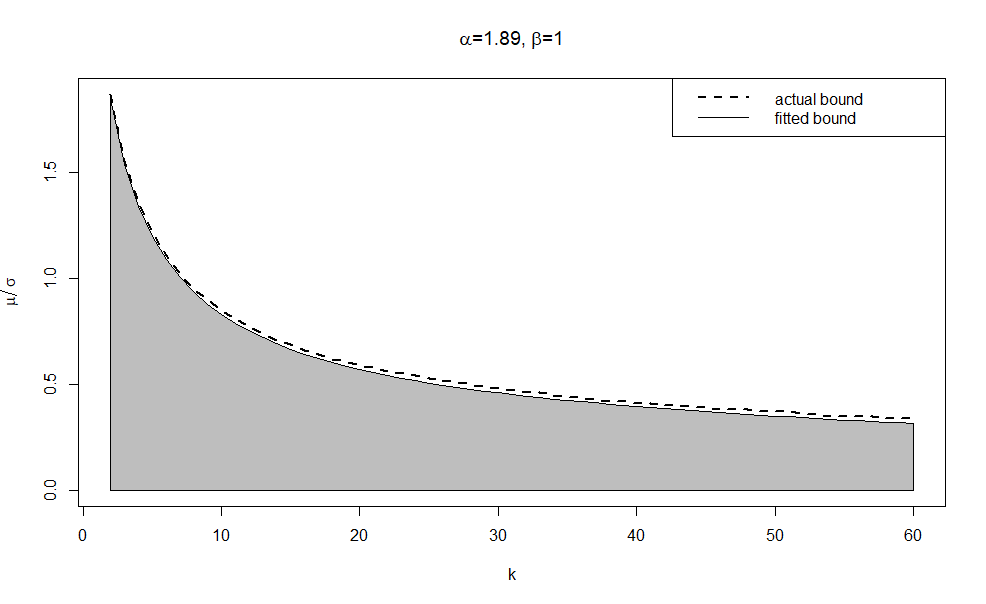}
  \caption{The line and shaded area illustrate \eqref{ubms} for $C=.95$. The dashed line is the returned $s_k$ from Algorithm \ref{alg1} using $C=.95$, $\alpha=1.89$ and $\beta=1$.}
  \label{fig:ubms}
\end{figure}

\subsection{Continuous Withdrawls}
This application is similar to what was done in section \ref{conW}. Suppose an investor wishes to continuously withdraw a total of 1 currency unit from the S\&P Composite Index over $n$ years, at a constant rate, with a given level of confidence. The same general procedure outlined in section \ref{conW} is used here. Let $C\in(0,1)$ denote the given level of confidence. Set $C=\mathbb{P}(Z^*\leq P)$, and solve for $P$, where $Z^*$ is as in Remark \ref{rcont}. Any initial investment less than $P$ results in a confidence level less than $C$. Set $\alpha=1.89$, $\beta=1$, $\sigma=.110\sqrt[1.89]{n}$ and $\mu=.0658n$. The goal here is to compute the necessary initial investment $P$ for $C\in\{.6+.001n:\ n=1,2,...,390\}$. Algorithm \ref{alg3} describes the procedure, and Figure \ref{fig:pc} illustrates $P$ as a function of $C$, for various $n$. To continuously withdraw $d\$$ with a high level of confidence, like 99\%, the initial investment needs to be at least $d\$$, when continuous withdraws are made over $n\leq 42$ years. For lower confidence, like 60\%, the initial investment needs to be at least $d/3\$$.

\begin{algorithm}
\caption{Compute $P$, given $C$}
\label{alg3}
\begin{algorithmic}
\Require $n\in\{2,3,...\}$, $Z\sim\mathcal{S}(1.89,-1,1,0)$, $C\in\{.6+.01n:\ n=1,2,...,35\}$
\State $\mu\gets.0658n$
\State $\sigma\gets.110\sqrt[1.89]{n}$
\State $\mathbb{P}(Z\leq q)\gets C$  \Comment{solve for $q$}
\State $P\gets\exp(\mu^*+\sigma^*q)$ \Comment{see Remark \ref{rcont} for $\mu^*,\sigma^*$}\\
\Return{$P$}
\end{algorithmic}
\end{algorithm}

\begin{figure}[H]
  \includegraphics[width=\linewidth]{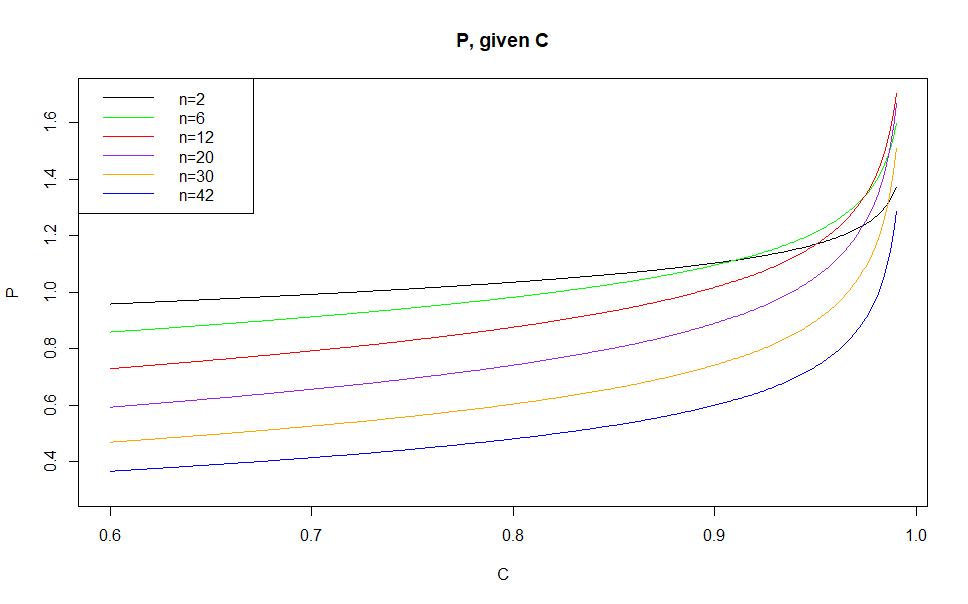}
  \caption{For each $C\in\{.6+.001n:\ n=1,2,...,390\}$, $P$ is computed using Algorithm \ref{alg3}. $P$ indicates a necessary initial investment in order to continuously withdraw 1 currency unit from the S\&P Composite Index over $n$ years, at a constant rate, with confidence level $C$.}
  \label{fig:pc}
\end{figure}

\section{Conclusions \& Further Research}\label{sec4}
As in \cite{oldDCA}, expected error between the lower bound of terminal wealth and the actual terminal wealth can be computed, but only in the case $\alpha>1$. When $\alpha\leq1$, the mean of a L\'evy alpha-stable random variable is undefined, causing the expected error to be undefined. An upper bound for the log-error between the lower bound of terminal wealth and the actual terminal wealth is also presented in \cite{oldDCA}. However, the L\'evy alpha-stable random variable has undefined 2nd or higher moments for $\alpha<2$. Consequently, the upper bound for log-error is less meaningful when $\alpha<2$ because higher moments cannot be used to tighten the upper bound.

The lower bound for terminal wealth of a regular investment schedule has interesting application for withdrawals, since a schedule of withdrawals can be transformed into a regular investment schedule. The result is an upper bound on the probability to complete a given schedule of withdrawals. Analogously, an upper bound for terminal wealth of a regular investment schedule would produce a lower bound on the probability to complete a given schedule of withdrawals. With a lower bound on the probability to complete a given schedule of withdrawals, it would be possible to specify a sufficient initial investment to complete a given schedule of withdrawals with a particular level of confidence. 

\begin{appendices}

\section{Proofs}\label{secA1}
\begin{lemma}
For all $a,c,x\in(0,\infty)$, 
\begin{equation*}
(a+c)\Big(\frac{x}{a}\Big)^{\frac{a}{a+c}}\leq x+c.
\end{equation*}
Moreover, there is equality at $x=a$.
\label{llb}
\end{lemma}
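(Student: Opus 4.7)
The plan is to reduce the claimed inequality to the standard weighted AM--GM inequality (equivalently, the tangent-line bound for the concave function $y\mapsto y^b$ with $b\in(0,1)$).

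First, I would divide both sides of the desired inequality by $a+c$ and rewrite the right-hand side as
\begin{equation*}
\frac{x+c}{a+c}=\frac{a}{a+c}\cdot\frac{x}{a}+\frac{c}{a+c}.
\end{equation*}
Setting $b=\frac{a}{a+c}\in(0,1)$ so that $1-b=\frac{c}{a+c}$, and substituting $y=\frac{x}{a}\in(0,\infty)$, the inequality becomes
\begin{equation*}
y^{b}\;\leq\;by+(1-b).
\end{equation*}

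Next, I would prove this last inequality in one of two equivalent ways. The clean way is weighted AM--GM: for $y>0$ and weights $b,1-b\in(0,1)$,
\begin{equation*}
y^{b}\cdot 1^{1-b}\;\leq\;b\cdot y+(1-b)\cdot 1,
\end{equation*}
with equality iff $y=1$. Alternatively, one can note that $f(y)=y^{b}$ is concave on $(0,\infty)$ since $b\in(0,1)$, so $f$ lies below its tangent line at $y=1$; the tangent there is precisely $1+b(y-1)=by+(1-p)$ \ldots I mean $by+(1-b)$, again with equality iff $y=1$. Tracing $y=1$ back through the substitution gives $x=a$, which yields the equality case.

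There is no real obstacle here: once the substitution $b=a/(a+c)$, $y=x/a$ is made, the statement is exactly weighted AM--GM. The only thing to be slightly careful about is verifying the arithmetic rearrangement $\tfrac{x+c}{a+c}=b\cdot\tfrac{x}{a}+(1-b)$, which is immediate from $b\cdot\tfrac{x}{a}=\tfrac{x}{a+c}$ and $1-b=\tfrac{c}{a+c}$, so that the inequality reduces cleanly. Thus the whole argument is a one-line substitution followed by a citation of AM--GM (or concavity), with equality at $x=a$.
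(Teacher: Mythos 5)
Your proof is correct. The substitution $b=a/(a+c)$, $y=x/a$ does reduce the claim exactly to $y^{b}\leq by+(1-b)$ for $y>0$ and $b\in(0,1)$, which is the weighted AM--GM inequality (equivalently, concavity of $y\mapsto y^{b}$ and the tangent line at $y=1$), and the equality case $y=1$ correctly traces back to $x=a$. Note that the paper itself gives no in-text argument for this lemma --- it simply defers to the appendix of the cited reference --- so your write-up is a complete, self-contained substitute; the only blemish is the momentary ``$(1-p)$'' typo, which you already corrected to $(1-b)$ in the same sentence.
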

\begin{proof}
See the Appendix of \cite{oldDCA}.
\end{proof}

\begin{lemma}
Let $a\neq0$, $b\in\mathbb{R}$ and $A_i\sim\mathcal{S}(\alpha,\beta,\sigma^*_i,\mu^*_i)$ for $i=1,2$. In addition, let $A_1$ and $A_2$ be independent. Then
\begin{equation*}
aA_1+b\sim\begin{cases}
\mathcal{S}(\alpha,\frac{a}{\lvert a\rvert}\beta,\lvert a\rvert\sigma^*_1,a\mu^*_1+b)&\alpha\neq1\\
\mathcal{S}(1,\frac{a}{\lvert a\rvert}\beta,\lvert a\rvert\sigma^*_1,a\mu^*_1+b-\frac{2}{\pi}\beta a\sigma^*_1\log \lvert a\rvert)&\alpha=1
\end{cases},
\end{equation*}
and
\begin{equation*}
A_1+A_2\sim\mathcal{S}(\alpha,\beta,[(\sigma^*_1)^{\alpha}+(\sigma^*_2)^{\alpha}]^{\frac{1}{\alpha}},\mu^*_1+\mu^*_2).
\end{equation*}
\label{stableT}
\end{lemma}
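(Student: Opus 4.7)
The plan is to verify both statements by direct computation with characteristic functions, since each claim simply asserts equality in distribution, and a L\'evy alpha-stable law is uniquely determined by its characteristic function $\Phi$ as given in Section~\ref{notation}. Throughout I will use the relation $|\theta|\,\mathrm{sgn}(\theta)=\theta$ and handle the two regimes $\alpha\neq 1$ and $\alpha=1$ separately, because only the latter contains the non-linear $\log|\theta|$ term that causes bookkeeping issues.

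For the affine transformation $aA_1+b$, start from
\[
\mathbb{E}\!\left[e^{i\theta(aA_1+b)}\right]=e^{i\theta b}\,\Phi_{A_1}(a\theta),
\]
and substitute the definition of $\Phi_{A_1}$ with argument $a\theta$. In the case $\alpha\neq 1$, the pieces that depend on $\theta$ factor cleanly: $|\sigma_1^*\,a\theta|^{\alpha}=(|a|\sigma_1^*)^{\alpha}|\theta|^{\alpha}$, while $\mathrm{sgn}(a\theta)=\mathrm{sgn}(a)\mathrm{sgn}(\theta)$ turns the skewness into $(a/|a|)\beta$. Collecting terms gives exactly the characteristic function of $\mathcal{S}(\alpha,(a/|a|)\beta,|a|\sigma_1^*,a\mu_1^*+b)$. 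In the case $\alpha=1$, the extra factor $\log|a\theta|=\log|a|+\log|\theta|$ appears; the $\log|\theta|$ part combines with the new scale $|a|\sigma_1^*$ to recover the standard $\alpha=1$ form with skewness $(a/|a|)\beta$, while the $\log|a|$ part produces an additional linear-in-$\theta$ imaginary term. Using $|\theta|\mathrm{sgn}(a)\mathrm{sgn}(\theta)=\mathrm{sgn}(a)\theta$ and $\mathrm{sgn}(a)|a|=a$, this extra term is $-i a\sigma_1^*\beta(2/\pi)\theta\log|a|$, which is absorbed into the location parameter to give $a\mu_1^*+b-\tfrac{2}{\pi}\beta a\sigma_1^*\log|a|$, matching the claim.

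For the sum $A_1+A_2$, independence yields $\Phi_{A_1+A_2}(\theta)=\Phi_{A_1}(\theta)\Phi_{A_2}(\theta)$. Since both factors share the same $\alpha$ and $\beta$, the bracketed skewness factor $(1-i\beta\,\mathrm{sgn}(\theta)\tan\tfrac{\pi\alpha}{2})$ for $\alpha\neq 1$, or $(1+i\beta(2/\pi)\mathrm{sgn}(\theta)\log|\theta|)$ for $\alpha=1$, is identical in the two exponents, so the scale contributions add as $(\sigma_1^*)^{\alpha}|\theta|^{\alpha}+(\sigma_2^*)^{\alpha}|\theta|^{\alpha}$, which one rewrites as $|\sigma\theta|^{\alpha}$ with $\sigma=[(\sigma_1^*)^{\alpha}+(\sigma_2^*)^{\alpha}]^{1/\alpha}$. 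The locations add directly, giving $\mu_1^*+\mu_2^*$. In the $\alpha=1$ case the $\log|\theta|$ factor is the same in both characteristic functions and therefore merely picks up the summed scale $\sigma_1^*+\sigma_2^*$, which is consistent with the $1/\alpha=1$ exponent and requires no additional correction.

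The only genuine obstacle is the $\alpha=1$ case of the first statement, where one must be careful to separate $\log|a\theta|$ into $\log|a|+\log|\theta|$, track the $\mathrm{sgn}$ factors correctly, and recognize that the $\log|a|$ piece becomes a deterministic shift rather than a change of shape. Everything else is routine algebra on the exponent.
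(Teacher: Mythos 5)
Your proof is correct, but it is not the route the paper takes: the paper does not prove Lemma \ref{stableT} at all, it simply cites the standard reference \cite{nolan2020univariate} on univariate stable distributions. What you supply is the self-contained characteristic-function computation that underlies that citation, and it checks out. The affine case for $\alpha\neq1$ is immediate from $\mathbb{E}[e^{i\theta(aA_1+b)}]=e^{i\theta b}\Phi_{A_1}(a\theta)$ together with $\lvert\sigma_1^*a\theta\rvert^{\alpha}=(\lvert a\rvert\sigma_1^*)^{\alpha}\lvert\theta\rvert^{\alpha}$ and $\mathrm{sgn}(a\theta)=\mathrm{sgn}(a)\mathrm{sgn}(\theta)$; you correctly identify the only delicate point, namely the $\alpha=1$ case where $\log\lvert a\theta\rvert=\log\lvert a\rvert+\log\lvert\theta\rvert$ splits into a piece that reconstitutes the standard $\alpha=1$ form with scale $\lvert a\rvert\sigma_1^*$ and skewness $(a/\lvert a\rvert)\beta$, and a residual term $-i\tfrac{2}{\pi}\beta a\sigma_1^*\theta\log\lvert a\rvert$ that is linear in $\theta$ and hence a pure location shift, yielding exactly the stated $a\mu_1^*+b-\tfrac{2}{\pi}\beta a\sigma_1^*\log\lvert a\rvert$. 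The convolution case is likewise handled correctly: with common $\alpha$ and $\beta$ the bracketed skewness factor is identical in both exponents, so the scale contributions add at the level of $\lvert\theta\rvert^{\alpha}$ coefficients and the locations add, with no extra correction needed even when $\alpha=1$. The trade-off is the obvious one: the citation keeps the paper short and defers to a canonical source, while your computation makes the lemma verifiable in place and makes explicit why the $\alpha=1$ location correction has the precise form it does, which is the very term that propagates into the $\alpha=1$ branches of Theorems \ref{tG} and \ref{tG*}.
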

\begin{proof}
The results are taken from \cite{nolan2020univariate}.
\end{proof}

\subsubsection*{Theorem \ref{tG}}
\begin{proof}
The result is established via induction. By Definition \ref{defz} $Z_1=Y_1=c_0X_1$. Since $\log X_1\sim\mathcal{S}(\alpha,\beta,\sigma\sqrt[\alpha]{t_1},\mu t_1)$, it follows from Lemma \ref{stableT} that 
\begin{equation*}
\log Z_1\sim\begin{cases}
\mathcal{S}(\alpha,\beta,c_0\sigma\sqrt[\alpha]{t_1},c_0\mu t_1)&\alpha\neq 1\\
\mathcal{S}(\alpha,\beta,c_0\sigma t_1,c_0\mu t_1-\frac{2}{\pi}\beta c_0\sigma t_1\log c_0)&\alpha= 1
\end{cases}.
\end{equation*} 

Now suppose the result holds for all $k'<k$ where $k\in\mathbb{N}\setminus\{1\}$. By \eqref{recursion}, 
\begin{equation}
Y_k=X_k\cdot(Y_{k-1}+c_{k-1}).
\label{zpart1}
\end{equation}
By Lemma \ref{llb}, 
\begin{equation}
(a_{k-1}+c_{k-1})\Bigg(\frac{Y_{k-1}}{a_{k-1}}\Bigg)^{\frac{a_{k-1}}{a_{k-1}+c_{k-1}}}\leq Y_{k-1}+c_{k-1},\quad w.p.1.
\label{zpart2}
\end{equation}
Combining \eqref{zpart1}, \eqref{zpart2} and Definition \ref{defz} yields $Z_k\leq R_k$ w.p.1.

Since $X_k$ and $Z_{k-1}$ are independent and log-stable with the same $\alpha$ and $\beta$, $Z_k$ is also log-stable. Applying properties of $\log$ to Definition \ref{defz},
\begin{equation}
\log Z_k=\log(a_{k-1}+c_{k-1})+\log X_k+b_{k-1}(\log Z_{k-1}-\log a_{k-1}).
\label{gzkk}
\end{equation}
Applying Lemma \ref{stableT} to \eqref{gzkk} shows that $\log Z_k\sim\mathcal{S}(\alpha,\beta,\sigma_k,\mu_k)$.
\end{proof}

\subsubsection*{Theorem \ref{t1}}
\begin{proof}
By Theorem \ref{tG}, the result holds when $\mu_k$ and $\sigma_k$ are as in Theorem \ref{tG}. It remains to be verified is that the expressions of $\mu_k$ and $\sigma_k$ given in Theorems \ref{tG} and \ref{t1} are equal. This is clearly the case for $k=1$. Suppose the expressions of $\mu_k$ and $\sigma_k$ given in Theorems \ref{tG} and \ref{t1} are equal for all $k'<k$ where $k\in\mathbb{N}\setminus\{1\}$. Then by Theorem \ref{tG} and the induction assumption,
\begin{equation*}
\begin{split}
\mu_k&=\log[\exp\mu_{k-1}+1]+\mu\\
&=\log\Big[\frac{\exp(\mu(k-1))-1}{\exp\mu-1}\exp\mu+1\Big]+\mu\\
&=\log\frac{\exp(\mu k)-1}{\exp\mu-1}+\mu.
\end{split}
\end{equation*}
Using Theorem \ref{tG}, the recursion $\sigma_j^{\alpha}=(b_{j-1}\sigma_{j-1})^{\alpha}+\sigma^{\alpha}$ with $\sigma_1^{\alpha}=(c_0\sigma)^{\alpha}t_1$ implies that $\sigma_k^{\alpha}=\sigma^{\alpha}(1+\sum_{j=1}^{k-1}(\prod_{i=1}^jb_{k-i})^{\alpha})$, where
\begin{equation*}
b_{k-i}=\frac{\exp\mu_{k-i}}{\exp\mu_{k-i}+1}=\frac{\exp\{\mu (k-i+1)\}-\exp\mu}{\exp\{\mu (k-i+1)\}-1}.
\end{equation*}
Observe that for all $i\in\mathbb{N}$,
\begin{equation*}
\frac{\exp(\mu k)-\exp\{\mu (i-1)\}}{\exp(\mu k)-1}\cdot\frac{\exp\{\mu (k-i+1)\}-\exp\mu}{\exp\{\mu (k-i+1)\}-1}=\frac{\exp(\mu k)-\exp(\mu i)}{\exp(\mu k)-1}.
\end{equation*}
It follows that 
\begin{equation*}
\prod_{i=1}^jb_{k-i}=\frac{\exp(\mu k)-\exp(\mu j)}{\exp(\mu k)-1}.
\end{equation*}
Thus, for all $k\in\mathbb{N}$,
\begin{equation}
\begin{split}
(\frac{\sigma_k}{\sigma})^{\alpha}&=1+\sum_{j=1}^{k-1}\Big(\frac{\exp(\mu k)-\exp(\mu j)}{\exp(\mu k)-1}\Big)^{\alpha}\\
&=1+\sum_{j=1}^{k-1}\Big(1-\frac{\exp(\mu j)-1}{\exp(\mu k)-1}\Big)^{\alpha}.
\end{split}
\label{cfms}
\end{equation}
\end{proof}

\subsubsection*{Theorem \ref{tcont}}
\begin{proof}
By Theorem \ref{t1}, for each $n\in\mathbb{N}$, there exists $Z_n$ such that $\frac{1}{n}Z_n\leq \mathcal{Y}_n$ w.p.1 and $\log Z_n\sim\mathcal{S}(\alpha,\beta,\sigma_n,\mu_n)$, using the substitutions $\mu\leftarrow\frac{\mu}{n}$ and $\sigma^{\alpha}\leftarrow\frac{\sigma^{\alpha}}{n}$ to evaluate $\mu_n$ and $\sigma_n$. To be more clear, the mean and scale parameters are 
\begin{equation*}
\begin{split}
\mu_n&=\frac{\mu}{n}+\log\frac{\exp\mu-1}{\exp\frac{\mu}{n}-1}\\
\sigma_n^{\alpha}&=\frac{\sigma^{\alpha}}{n}\Bigg(1+\sum_{j=1}^{k-1}\Big(1-\frac{\exp\frac{\mu j}{n}-1}{\exp\mu-1}\Big)^{\alpha}\Bigg).
\end{split}
\end{equation*}
It follows that $\mathbb{P}(\frac{1}{n}Z_n\leq x)\geq\mathbb{P}(\mathcal{Y}_n\leq x)$ for each $x\in\mathbb{R}$ and $n\in\mathbb{N}$. 

Note that $\lim_{n\to\infty}\mathcal{Y}_n$ is the integral of a geometric L\'evy alpha-stable process w.r.t. time (see \cite{milevsky2003continuous} for the derivation using $\alpha=2$ and $\beta=0$), so $\lim_{n\to\infty}\mathbb{P}(\mathcal{Y}_n\leq x)$ exists. Existence of the limit follows because sample paths of a geometric L\'evy alpha-stable process are c\`adl\`ag, which implies each sample path is integrable. Thus, $\mathcal{Y}_n$ converges pointwise on $\Omega$. Since each $\mathcal{Y}_n$ is $\mathcal{F}$-measurable, it follows from closure under pointwise limits that $\lim_{n\to\infty}\mathcal{Y}_n$ is $\mathcal{F}$-measurable. 

By L\'evy's convergence theorem, existence of $\lim_{n\to\infty}\mathbb{P}(\frac{1}{n}Z_n\leq x)$ will follow if the characteristic functions of $\log(\frac{1}{n}Z_n)$ converge. Thus, it suffices to show that $\sigma_n$ and $\mu_n-\log n$ converge.

An application of L'Hopital's rule reveals that
\begin{equation*}
\lim_{n\to\infty}\mu_n-\log n=\log\frac{\exp\mu-1}{\mu}.
\end{equation*}
By \eqref{cfms},
\begin{equation*}
\begin{split}
\sigma_n^{\alpha}&=\frac{\sigma^{\alpha}}{n}\Bigg[1+\sum_{j=1}^{n-1}\Big(\frac{\exp\mu -\exp(\mu\frac{j}{n})}{\exp\mu-1}\Big)^{\alpha}\Bigg]\\
&=\frac{\sigma^{\alpha}}{n}\Bigg[1+\sum_{j=1}^{n-1}\Big(\frac{1-\exp(-\mu\frac{j}{n})}{1-\exp(-\mu)}\Big)^{\alpha}\Bigg].
\end{split}
\end{equation*}
Therefore
\begin{equation*}
\begin{split}
\lim_{n\to\infty}\sigma_n^{\alpha}&=\sigma_n^{\alpha}\lim_{n\to\infty}\sum_{j=1}^{n-1}\frac{1}{n}\Big(\frac{1-\exp(-\mu\frac{j}{n})}{1-\exp(-\mu)}\Big)^{\alpha}\\
&=\sigma^{\alpha}\int_0^1\Big(\frac{1-\exp(-\mu x)}{1-\exp(-\mu)}\Big)^{\alpha}dx.
\end{split}
\end{equation*}
\end{proof}

\subsubsection*{Proof of the upper bound in \eqref{xxr}}
Let $f(x)=1-\exp(-\mu x)$ and $U(x)=\big(1-\exp(-\mu)\big)x^r$. Recall that $r=\frac{\mu}{\exp\mu-1}$. The goal is to show $U(x)-f(x)\geq0$ on $[0,1]$. The result clearly holds for $\mu=0$, so require $\lvert\mu\rvert>0$. Observe that $U(x)-f(x)$ is differentiable on $(0,1]$, where
\begin{equation}
\begin{split}
\frac{d}{dx}[U(x)-f(x)]&=\big(1-\exp(-\mu)\big)rx^{r-1}-\mu\exp(-\mu x)\\
&=\frac{\mu}{\exp\mu}[x^{r-1}-\exp(\mu(1-x))].
\end{split}
\label{ufx}
\end{equation}

From \eqref{ufx}, it follows that $\frac{d}{dx}[U(x)-f(x)]\odot0$ iff $\exp\frac{\mu(1-x)}{r-1}-x\odot0$, where $\odot\in\{\leq,\geq\}$. Set $g(x)=\exp\frac{\mu(1-x)}{r-1}-x$. Observe that $g(1)=0$ and $g(x)$ is continuous and strictly convex on $(0,1]$. Thus, either $g(x)\leq0$ on $(0,1]$, $g(x)\geq0$ on $(0,1]$, or there exists $y\in(0,1)$ such that $g(x)\geq0$ on $(0,y]$ and $g(x)\leq0$ on $(y,1]$. Accounting for the fact that $\lim_{x\to0^+}U(x)-f(x)=U(0)-f(0)$, all three cases imply that the global minimum of $U(x)-f(x)$ occurs at $x=0$ or $x=1$. Last, observe that $U(x)-f(x)=0$ at $x=0$ and $x=1$. 

\subsubsection*{Theorems \ref{tG*}, \ref{t1*} and \ref{tW}}
\begin{proof}
Observe that $\log(X_k^{-1})\sim\mathcal{S}(\alpha,-\beta,\sigma\sqrt[\alpha]{t_k-t_{k-1}},-\mu(t_k-t_{k-1}))$. For Theorems \ref{tG*} and \ref{t1*}, apply Theorems \ref{tG} and \ref{t1} to \ref{recursionW2}. Theorem \ref{tW} follows from the fact that $P^*(\omega)\leq P\iff W_k(\omega)\geq0$ and $Z_0^*\leq P^*$ w.p.1. See \eqref{recursionW2} for the construction of $P^*$ and \eqref{recursionW} for the construction of $W_k$. 
\end{proof}

\end{appendices}


\bibliography{sn-bibliography}


\end{document}